\newcommand{\Pm}{\mathcal{P}}
\newcommand{\Q}{\mathcal{Q}}
\newcommand{\X}{\mathcal{X}}
\newcommand{\F}{\mathcal{F}}
\DeclareMathOperator*{\esssup}{ess\,sup}
\DeclareMathOperator{\sign}{sign}
\theoremstyle{plain}
\newtheorem{theorem}{Theorem}
\newtheorem{lemma}{Lemma}
\newtheorem{corollary}{Corollary}
\newtheorem{proposition}{Proposition}
\newtheorem*{lemma*}{Lemma}
\theoremstyle{definition}
\newtheorem{definition}{Definition}
\theoremstyle{remark}
\newtheorem{remark}{Remark}
\newtheorem{example}{Example}
\newcounter{labelcnt}
\renewcommand{\thelabelcnt}{(\alph{labelcnt})}
\newcommand{\setlabel}[1]{%
	\refstepcounter{labelcnt}\ltx@label{lbl:#1}%
	{\text{\upshape\thelabelcnt}}%
}
\DeclareMathOperator{\dd}{d\!}
\newcommand{\argmax}{\operatornamewithlimits{argmax}}
\newcommand{\argmin}{\operatornamewithlimits{argmin}}
\begin{document}
	%

	\title{Concentration without Independence \\via Information Measures}
	%
	%
	%
	
	\author{
     \IEEEauthorblockN{Amedeo Roberto Esposito, Marco Mondelli}\\\IEEEauthorblockA{Institute of Science and Technology Austria
    \\\{amedeoroberto.esposito, marco.mondelli\}@ist.ac.at}}

	\maketitle
	
	\begin{abstract}
	 We propose a novel approach to concentration for non-independent random variables. 
  The main idea is to ``pretend'' that the random variables are independent and pay a multiplicative price measuring how far they are from actually being independent. This price is encapsulated in the Hellinger integral between the joint and the product of the marginals, which is then upper bounded leveraging tensorisation properties. Our bounds represent a natural generalisation of concentration inequalities in the presence of dependence: we recover exactly the classical bounds (McDiarmid's inequality) when the random variables are independent. Furthermore, in a ``large deviations'' regime, we obtain the same decay in the probability as for the independent case, even when the random variables display non-trivial dependencies.
  To show this, we consider a number of applications of interest. First, we provide a bound for Markov chains with finite state space. Then, we consider the Simple Symmetric Random Walk, which is a non-contracting Markov chain, and a non-Markovian setting in which the stochastic process depends on its entire past. To conclude, we propose an application to Markov Chain Monte Carlo methods, where our approach leads to an improved lower bound on the minimum burn-in period required to reach a certain accuracy. In all of these settings, we provide a regime of parameters in which our bound fares better than what the state of the art can provide.
	\end{abstract}
	
	\begin{IEEEkeywords}
		Concentration, dependent random variables, large deviations, information measures, Hellinger integral, Markov chains, McDiarmid's inequality, hypercontractivity
	\end{IEEEkeywords}

	%
	\IEEEpeerreviewmaketitle

	\section{Introduction}
It is well-known that, given a sequence $X^n=(X_1,\ldots, X_n)$ of independent, but not necessarily identically distributed, random variables with joint measure $\Pm_{X^n}$, one can prove that for every function $f$ satisfying proper Lipschitz assumptions:
	\begin{equation}
	    \Pm_{X^n}(|f-\Pm_{X^n}(f)|\geq t) \leq 2\exp\left(-\frac{t^2}{k \left\lVert f\right\rVert_{\text{Lip}}^2}\right).
	\end{equation}
	Here, $\left\lVert f\right\rVert_{\text{Lip}}^2$ depends on the metric structure of the measure space, and $k$ is a constant depending on the approach used to prove the inequality, \emph{e.g.}, transportation-cost inequalities, log-Sobolev inequalities, martingale method, see the survey~\cite{concentrationMeasureII}. 
	One notable example is McDiarmid's inequality for functions with ``bounded jumps'': \textit{i.e.}, if for every $x^n,\hat{x}$ and every $1\leq i \leq n$ one has that \begin{equation}|f(x_1,\ldots,x_i,\ldots,x_n)-f(x_1,\ldots,\hat{x},\ldots,x_n)|\leq 
c_i \label{eq:mcDiarmidAssumption}, \end{equation} then the following holds \cite{mcdiarmid_1989}:
	\begin{equation}
	    \Pm_{X^n}(|f-\Pm_{X^n}(f)|\geq t) \leq 2\exp\left(-\frac{2t^2}{\sum_{i=1}^n c^2_i}\right). \label{eq:mcDiarmidStatement}
	\end{equation}
	This represents the ``golden standard'' of concentration. Interestingly, as underlined above, McDiarmid's inequality does not require the $X_i$'s to be identically distributed; it does, however, require the random variables to be \emph{independent}.
Most of the methods in the literature that tried to relax the latter assumption 
 required the development of novel techniques. However, existing results generally do not recover the rate of decay provided in the independent setting.
 
 In this paper, we present a novel approach that 
 outperforms the state of the art in various settings and regimes. Specifically, we show improved bounds for 
 finite-state space Markov chains (\Cref{sec:example1}), the Simple Symmetric Random Walk (SSRW, \Cref{sec:example2}), a non-Markovian process (\Cref{sec:example3}), and Monte Carlo Markov Chain (MCMC, \Cref{sec:MCMC}).
  In the case of the SSRW, our improvements are the most dramatic: in sharp contrast with existing techniques, we are able to capture the correct scaling between the distance from the average $t$, the number of variables $n$, and the decay probability in the concentration bound. 
We remark that our new method -- based on a change of measure argument -- is rather flexible and can be employed in most settings. In fact, it only requires the absolute continuity between the joint and the product of the marginals, while existing approaches generally have more restrictive assumptions (\textit{e.g.}, Markovianity with stationary distribution~\cite{dependentViaStatAndChange} or contractivity~\cite{Marton1996,Marton1996BoundingB}). The key idea is to shift the focus 
from proving concentration to bounding an information measure (\textit{i.e.}, the Hellinger integral, see~\Cref{def:hellInt}) between the joint distribution and the product of the marginals. Crucially, the Hellinger integral satisfies tensorisation properties that allow us to easily upper bound it, even in high-dimensional settings (see~\Cref{app:tensorisation}).
We highlight that our approach provides a \emph{natural} generalisation of the existing concentration of measure results to dependent random variables, in the sense that we recover exactly McDiarmid's inequality when the random variables are independent. Furthermore, for sufficiently large $t$, namely, in a ``large deviations'' regime, we approach the decay rate \eqref{eq:mcDiarmidStatement} for the independent case, even when the random variables are actually dependent.

The rest of the paper is organized as follows. In~\Cref{ss:relatedWork}, we discuss related work in the area.
in~\Cref{sec:preliminaries} we cover the preliminaries, namely, information measures (\Cref{sec:infoMeas}), Markov kernels (\Cref{sec:markovPreliminaries}), and strong data-processing inequalities (SDPIs, \Cref{sec:sdpi}). We then provide the main result of this work in~\Cref{sec:mainResult}, which is then applied in~\Cref{sec:examples} to four different settings: finite-state space Markov chains (\Cref{sec:example1}), the SSRW (\Cref{sec:example2}) a non-Markovian stochastic process (\Cref{sec:example3}), and MCMC methods (\Cref{sec:MCMC}). Concluding remarks are provided in~\Cref{sec:conclusions}. Part of the proofs and additional discussions are deferred to the appendices. 

\subsection{Related Work}\label{ss:relatedWork}
 The problem of concentration for dependent random variables has been addressed in multiple ways. The first results in the area are due to Marton~\cite{Marton1996, Marton1996BoundingB, Marton1998, Marton2003} who heavily relied on transportation-cost inequalities (Pinsker-like inequalities) and an elegant mixture of information-theoretic and geometric approaches. Another important contribution, building upon Marton's work, was given by Samson in~\cite{samsonConcentration} where some of Marton's results were extended to include $\Phi$-mixing processes. More recent advances, complementing and generalising the work by Samson and Marton, were provided in~\cite{dependentViaMartingale}, where the Martingale method was employed to prove concentration for dependent (but defined on a countable space) processes and in~\cite{dependentViaCoupling, dependentViaMartonCouplingAndSpectral}, where the idea of couplings was exploited. In particular, the results derived in~\cite{dependentViaCoupling} are equivalent to the ones advanced in~\cite{dependentViaMartingale} but obtained through couplings rather than linear programming. Moreover, ~\cite{dependentViaMartonCouplingAndSpectral} leverages Marton's coupling. For an extensive treatise on Marton's coupling please refer to~\cite[Chapter 8]{BLM2013Concentration}. All of these approaches measure the degree of dependence by looking at distances between conditional distributions (organised in matrices whose norms are then computed, see~\Cref{eq:totalVarationDependentCase}) or by constructing ``minimal couplings'' between conditional distributions (see~\Cref{eq:martonDependentCase}). The resulting quantities, which are necessary in order to 
 analyse the corresponding probabilities, can be difficult to compute, especially in non-Markovian settings. 
 Another approach similar in spirit to ours is given in~\cite{dependentViaStatAndChange}, where a generalisation of Hoeffding's inequality for stationary Markov chains is provided. Other related work can be found in~\cite{markovChain1,markovChain2,markovChain3,markovChain4,markovChain5}. These results aim to establish Hoeffding-like inequalities for Markov chains by relating it to a different Markov chain whose cumulant generating function can be bounded under different assumptions:~\cite{markovChain1,markovChain2,markovChain3,markovChain4} are restricted to discrete and ergodic Markov chains, while~\cite{markovChain5} extends to general state-space but requires geometric ergodicity. Yet another approach in providing exponential concentration for geometrically ergodic Markov chains can be found in~\cite{subgaussianErgodicMarkovChains, mcDiarmidGeometricMarkovChains}. All these generalisations of Hoeffding's inequality do not, however, allow for arbitrary functions of a sequence of random variables, but they are restricted to (sums of) bounded functions applied to each individual sample and they all require the existence of a stationary distribution.  Given that the approach presented in~\cite{dependentViaStatAndChange} is more general than the one proposed in~\cite{markovChain1,markovChain2,markovChain3,markovChain4,markovChain5}, our results will be compared directly with \cite{dependentViaStatAndChange}.
    Another (less related) approach can be found in~\cite{dependentAzimin}, where the strength of the dependence is measured in a different way with respect to both this work and the related work mentioned above. Moreover, the approach in~\cite{dependentAzimin} is mostly restricted to empirical averages of bounded random variables and includes an additional additive factor that grows with the number of samples. Exponential bounds for stochastic chains of unbounded memory on countable alphabets are instead given in~\cite{gaussianBoundsStochasticChains}. Additionally, an approach that leverages the Kullback-Leibler divergence can be found in~\cite{KLMC,MCMCKL}.
    Finally, we remark that 	\cite{fullVersionGeneralization} exploits a technique similar to what is pursued in this work, in order to extend McDiarmid's inequality to the case where the function $f$ depends on the random variables themselves (while the random variables remain, in fact, independent). Said result was then applied to a learning setting.

 \section{Preliminaries}\label{sec:preliminaries}

 In this section, we will define the main objects utilised throughout the document and define the relevant notation.
	We will 
 adopt a measure-theoretic framework.
	Given a measurable space $(\X,\mathcal{F})$ and two measures $\mu,\nu$ which render it a measure space, if $\nu$ is absolutely continuous with respect to $\mu$ (denoted with $\nu\ll\mu$), then we will represent with $\frac{d\nu}{d\mu}$ the Radon-Nikodym derivative of $\nu$ with respect to $\mu$. Given a (measurable) function $f:\X \to \mathbb{R}$ and a measure $\mu$, we denote with $\mu(f) = \int f \dd \mu$ the 
Lebesgue integral of $f$ with respect to the measure $\mu$. 
The Radon-Nikodym derivatives represent the main building block of the following fundamental objects.

	\subsection{Hellinger integral, $\alpha$-norm and R\'enyi's $\alpha$-divergence}\label{sec:infoMeas}
    An important ingredient of this work is information measures. In particular, we will focus on Hellinger integrals which can be seen as a transformation of the $L^\alpha$-norms and of
 R\'enyi's $\alpha$-divergences. Let us introduce them and then show the relationships with other well-known objects in the literature. Hellinger integrals can be seen as a $\varphi$-Divergence with a specific parametrised choice of $\varphi$~\cite{fDiv1}. 
 	\begin{definition}[$\varphi$-divergences]\label{def:fDiv}
		Let $(\Omega,\F,\Pm),(\Omega,\F,\Q)$ be two probability spaces. Let $\varphi:\mathbb{R}^+\to \mathbb{R}$ be a convex function such that $\varphi(1)=0$. Consider a measure $\mu$ such that $\Pm\ll\mu$ and $\Q\ll\mu$. Denoting with $p,q$ the densities of the measures with respect to $\mu$, the \emph{$\varphi$-divergence of $\Pm$ from $\Q$} is defined as 
		\begin{align}
			D_\varphi(\Pm\|\Q):=\int q \varphi\left(\frac{p}{q}\right) \dd\mu.
		\end{align}
	\end{definition} 
Particularly relevant to us will be the family of parametrised divergences that stems from $\varphi_\alpha(x)=x^\alpha$ for $\alpha>1$. The function $\varphi_\alpha(x)$ is convex on the positive axis for every $\alpha>1$. However, it does not satisfy the property that $\varphi(1)=0$. Said requirement can be lifted with the consequence of losing the property that $D_\varphi(\nu\|\mu)=0$ if and only if $\nu = \mu$. We will call the family of divergences stemming from such functions the \emph{Hellinger integrals} of order $\alpha$.
	\begin{definition}[Hellinger integrals]
		Let $(\Omega,\F,\nu),(\Omega,\F,\mu)$ be two probability spaces, and let 
  $\varphi_\alpha:\mathbb{R}^+\to \mathbb{R}$ be defined as $\varphi_\alpha(x)=x^\alpha$. Let $\mu$ and $\nu$ be two probability measures such that $\nu\ll\mu$, then the \emph{Hellinger integral} of order $\alpha$ is given by
		\begin{align}
			H_\alpha(\nu\|\mu):= D_{\varphi_\alpha}(\nu\|\mu) = \int \left(\frac{d\nu}{d\mu}\right)^\alpha \dd\mu.
		\end{align}
	\label{def:hellInt}\end{definition}
Let us highlight that we are not considering Hellinger divergences of order $\alpha$ (including the so-called $\chi^2$-divergence) which consist of divergences stemming from $\frac{x^\alpha-1}{(\alpha-1)}$, but rather a transformation of said family. In fact, the Hellinger divergences are equal to $0$ if and only if the measures coincide. In contrast, the Hellinger integral is equal to $1$ if the two measures coincide.
 \begin{remark}[$\varphi$-Divergences]
	Despite the fact that~\Cref{def:fDiv} uses a reference measure $\mu$ and the densities with respect to this measure, 
 $\varphi$-divergences can be shown to be 
 independent from the dominating measure. In fact, when absolute continuity between $\Pm,\Q$ holds, \emph{i.e.}, $\Pm\ll\Q$,\footnote{We will make this assumption throughout the paper.}
 we obtain~\cite{fDiv1}
	\begin{equation}
		D_\varphi(\Pm\|\Q)= \int \varphi\left(\frac{d\Pm}{d\Q}\right)\dd\Q.
	\end{equation}
Moreover, $\varphi$-Divergences can be seen as a generalisation of well-known objects like the Kullback-Leibler Divergence. Indeed,
the KL-divergence is retrieved by setting $\varphi(t)=t\log(t)$. Other common examples are the Total Variation distance ($\varphi(t)=\frac12|t-1|$), the Hellinger distance ($\varphi(t)=(\sqrt{t}-1)^2$), and Pearson $\chi^2$-divergence ($\varphi(t)=t^2-1$). We remark that $\varphi$-divergences do not include the family of R\'enyi's $\alpha$-divergences.
 \end{remark}

Hellinger integrals, other than belonging to the family of $\varphi$-Divergences can also be related to other notable objects, namely: R\'enyi Divergences of order $\alpha$ and $L^\alpha$-norms~\cite{renyiEntropy,RenyiKLDiv}:
	\begin{definition}[R\'enyi divergences]\label{def:Renyi}
		Let $(\Omega,\F,\Pm),(\Omega,\F,\Q)$ be two probability spaces. Let $\alpha>0$ be a positive real number different from $1$. Consider a measure $\mu$ such that $\Pm\ll\mu$ and $\Q\ll\mu$ (such a measure always exists, \emph{e.g.}, $\mu=(\Pm+\Q)/2$)) and denote with $p,q$ the densities of $\Pm,\Q$ with respect to $\mu$. Then, the \emph{$\alpha$-divergence of $\Pm$ from $\Q$} is defined as 
		\begin{align}
			D_\alpha(\Pm\|\Q):=\frac{1}{\alpha-1} \log \int p^\alpha q^{1-\alpha} \dd\mu.
		\end{align}
	\end{definition}
     \begin{remark}
         	Definition \ref{def:Renyi} is independent of the chosen measure $\mu$.
		In fact, 
		$\int p^{\alpha}q^{1-\alpha} \dd\mu = \int \left(\frac{q}{p}\right)^{1-\alpha}\dd\Pm $ and, whenever $\Pm\ll\Q$ or $0<\alpha<1$, we have $\int p^{\alpha}q^{1-\alpha} \dd\mu= \int \left(\frac{p}{q}\right)^{\alpha}\dd\Q$, see \cite{RenyiKLDiv}. Furthermore, it can be shown that, if $\alpha>1$ and $\Pm\not\ll\Q$, then $D_\alpha(\Pm\|\Q)=\infty$. The behavior of the measure for $\alpha\in\{0,1,\infty\}$ can be defined by continuity. These objects can also be seen as a generalisation of the Kullback-Leibler Divergence. Indeed, one has that $D_1(\Pm\|\Q) = D(\Pm\|\Q)$ which denotes the KL-divergence between $\Pm$ and $\Q$; furthermore, if $D(\Pm\|\Q)=\infty$ or there exists $\beta>1$ such that $D_\beta(\Pm\|\Q)<\infty$, then $\lim_{\alpha\downarrow1}D_\alpha(\Pm\|Q)=D(\Pm\|\Q)$~\cite[Theorem 5]{RenyiKLDiv}. For an extensive treatment of $\alpha$-divergences and their properties, we refer the reader to~\cite{RenyiKLDiv}. 
     \end{remark}
Going back to the Hellinger integral, the following relationship holds:
	\begin{equation}
	    H_\alpha(\nu\|\mu)= \left\lVert \frac{d\nu}{d\mu} \right\rVert_{L^\alpha(\mu)}^\alpha = \exp\left((\alpha-1)D_\alpha(\nu\|\mu)\right) \label{eq:hellignerRenyi},
	\end{equation}
	where $\left\lVert \frac{d\nu}{d\mu} \right\rVert_{L^\alpha(\mu)}$ denotes the $L^\alpha$-norm of the Radon-Nikodym derivative with respect to the measure $\mu$.
    \subsection{Markov kernels}\label{sec:markovPreliminaries}
    Most of the comparisons with the state of the art will be drawn in 
    Markovian settings. In this section, we will define the main objects necessary in order to carry out said confrontation.
    \begin{definition}[Markov kernel]
    Let $(\Omega,\mathcal{F})$ be a measurable space. A \emph{Markov kernel} $K$ is a mapping $K:\mathcal{F}\times \Omega \to [0,1]$ such that:
    \begin{enumerate}
        \item for every $x\in \Omega$, the mapping $E\in\mathcal{F}\to K(E|x)$ is a probability measure on $(\Omega,\mathcal{F})$;
        \item for every $E\in\mathcal{F}$ the mapping $x \in \Omega \to K(E|x)$ is an $\mathcal{F}$-measurable real-valued function.
    \end{enumerate}
    \end{definition}
    A Markov kernel can be seen as acting on measures ``from the right'', \textit{i.e.}, given a measure $\mu$ on $(\Omega, \mathcal{F})$, 
    \begin{equation}
    \mu K(E) = \mu(K(E|\cdot)) = \int \dd\mu(x) K(E|x),
    \end{equation}
    and on functions ``from the left'',  \textit{i.e.}, given a function $f:\Omega \to \mathbb{R}$,
    \begin{equation}
        K f(x) = \int dK(y|x) f(y).
    \end{equation}
    Given a sequence of random variables $(X_n)_{n\in\mathbb{N}}$, one says that it represents a \emph{Markov chain} if, given $i\geq 1$,  there exists a Markov kernel $K_i$ such that for every measurable event $E$:
    \begin{equation}
        \mathbb{P}(X_i \in E | X_1,\ldots,X_{i-1}) = \mathbb{P}(X_i \in E | X_{i-1}) = K_i(E|X_{i-1})\qquad  \text{ almost surely}.
    \end{equation}
    If for every $i\geq 1$, $K_i = K$ for some Markov kernel $K$, then the Markov chain is said to be time-homogeneous. Whenever the index is suppressed from $K$, we will be referring to a time-homogeneous Markov chain. The kernel $K$ of a Markov chain describes the probability of getting from $x$ to $E$ in one step, \textit{i.e.}, for every $i\geq 1$, 
    $K(E|x) = \mathbb{P}(X_i \in E| X_{i-1}=x)$. One can then define (inductively) the $\kappa$-step 
    kernel $K^\kappa$ as follows:
    \begin{equation}
        K^\kappa(E|x) = \int K^{\kappa-1}(E|y)dK(y|x).
    \end{equation}
    Note that $K^\kappa$ is also a Markov kernel, and it represents the probability of getting from $x$ to $E$ in $\kappa$ steps: $K^\kappa(E|x)= \mathbb{P}(X_{\kappa+1}\in E|X_{1}=x)$.
    If $(X_n)_{n\in\mathbb{N}}$ is the Markov chain associated to the kernel $K$ and $X_0 \sim \mu$, then $\mu K^m$ denotes the measure of $X_{m+1}$ at every $m\in \mathbb{N}$. Furthermore, a probability measure $\pi$ is a stationary measure for $K$ if $\pi K(E) = \pi(E)$ for every measurable event $E$. We also note that, if the state space is discrete, then $K$ can be represented using a stochastic matrix. 
    
    Given this dual perspective on Markov operators (acting on measures or functions), one can then study their contractive properties. In particular, let us define 
    \begin{equation}
            \left\lVert K\right\rVert_{\alpha\to\alpha} := \sup_{f\neq 0} \frac{\left\lVert K f\right\rVert_{\alpha}}{\left\lVert f \right\rVert_{\alpha}}.\label{eq:contractivityCoefficient}
        \end{equation}
    Then, Markov kernels are generally contractive~\cite{spreadingOfSetsHypercontractivity}, meaning that $ \left\lVert K\right\rVert_{\alpha\to\alpha} \leq 1$ and, consequently, $
        \left\lVert Kf \right\rVert_{\alpha} \leq \left\lVert f \right\rVert_\alpha$ for every $f$.
        Similarly, given $\gamma\leq \alpha$, one can define the following quantity \begin{equation}
            \left\lVert K\right\rVert_{\alpha\to \gamma} := \sup_{f\neq 0} \frac{\left\lVert K f\right\rVert_{\alpha}}{\left\lVert f \right\rVert_{\gamma}}.\label{eq:hypercontractivity}
        \end{equation}
        It has been proven that many Markovian operators are hyper-contractive~\cite{spreadingOfSetsHypercontractivity,hypercontractivityBonamiBeckner,hypercontractivity}, meaning that $\left\lVert K\right\rVert_{\alpha\to \gamma}\leq 1$
        for some $\gamma < \alpha$.  Given a kernel $K$ and $\alpha>1$, we denote by $\gamma^\star_K(\alpha)$ the smallest $\gamma$ such that $K$ is hyper-contractive, \emph{i.e.}, such that  $\left\lVert K\right\rVert_{\alpha\to \gamma}\leq 1$. Said coefficient has been characterised for some Markov operators~\cite{hypercontractivity,spreadingOfSetsHypercontractivity}. In case the Markov kernel is not time-homogeneous, in order to simplify the notation, instead of denoting the corresponding coefficient with $\gamma^\star_{K_i}(\alpha)$, we will simply denote it with $\gamma^\star_{i}(\alpha)$. 
        
        Given a Markov kernel $K$ and a measure $\mu$, one can also define the adjoint/dual operator (or backward channel) $K^\leftarrow$ as the operator such that
         $\langle g, Kf\rangle = \langle K^\leftarrow g, f\rangle$ for all $g$ and $f$~\cite[Eq. (1.1)]{sdpiRaginsky}. While one can define dual Markovian operators more generally, here we will focus on discrete settings where they can be explicitly specified via $K$ and $\mu$~\cite[Eq. (1.2)]{sdpiRaginsky}:
    \begin{equation}
            K_\mu^\leftarrow(y|x) = \frac{K(y|x)\mu(x)}{\mu K (y)}.
        \end{equation}
     \subsection{Strong Data-Processing Inequalities}\label{sec:sdpi}
    An important property shared by divergences is the Data-Processing Inequality (DPI): given two measures $\mu,\nu$ and a Markov kernel $K$, one has that, for every convex $\varphi$,
    \begin{equation}
        D_\varphi(\nu K\| \mu K) \leq D_\varphi(\nu\|\mu).\label{eq:DPI}
    \end{equation}
    This property holds as well for R\'enyi's $\alpha$-divergences, despite them not being a $\varphi$-divergence~\cite[Theorem 9]{RenyiKLDiv}.
    DPIs represent a widely used tool and a line of work has focused on tightening them. In particular, in many settings of interest, given a reference measure $\mu$, one can show that $D_\varphi(\nu K\|\mu K)$ is strictly smaller than $D_\varphi(\nu\|\mu)$  unless $\nu=\mu$. Furthermore, the characterization of the ratio $D_\varphi(\nu K\|\mu K)/D_\varphi(\nu\|\mu)$ has lead to the study of ``strong Data-Processing Inequalities''~\cite[Definition 3.1]{sdpiRaginsky}.  
    
\begin{definition}[Strong Data-Processing Inequalities]
    Given a probability measure $\mu$, a Markov kernel $K$ and  a convex function $\varphi$, we say that $K$ satisfies a $\varphi$-type \emph{Strong Data-Processing Inequality (SDPI)} at $\mu$ with constant $c\in[0,1)$ if
    \begin{equation}
    D_\varphi(\nu K\|\mu K) \leq c\cdot D_\varphi(\nu\|\mu),
    \end{equation}
    for all $\nu\ll\mu$. The tightest such constant $c$ is denoted by
\begin{align*}
\eta_\varphi(\mu,K) &= \sup_{\nu\neq \mu} \frac{D_\varphi(\nu K\|\mu K) }{D_\varphi(\nu\|\mu)}, \\
\eta_\varphi(K) &= \sup_{\mu} \eta_\varphi(\mu,K).
\end{align*}
\end{definition}
 \begin{example}[SDPI for the KL and the BSC]
     Let $\mu=\text{Ber}(1/2)$, $\epsilon<\frac12$ and $K=\text{BSC}(\epsilon)$, \textit{i.e.}, $K(y|x) =\epsilon$ if $x=y$ and $K(y|x) =1-\epsilon$ otherwise. Then, one has that $\eta_{x\log x}(\mu,K)=(1-2\epsilon)^2$~\cite{spreadingOfSetsHypercontractivity}, which implies that  $\eta_{x\log x}(\mu,K) < 1$ for all $\epsilon>0$. 
 \end{example} 
 While $\eta_\varphi$ can be a difficult object to compute even for simple channels, some universal upper and lower bounds are known~\cite[Theorems 3.1, 3.3]{sdpiRaginsky}:
 \begin{equation}
 \eta_\varphi(K) \leq \sup_{x,\hat{x}} \left\lVert K(\cdot|x)-K(\cdot|\hat{x})\right\rVert_{TV} = \eta_{|x-1|}(K)=\eta_{TV}(K), \label{eq:etaTV}
 \end{equation}
  \begin{align}
 \eta_\varphi(\mu,K) \geq \eta_{(x-1)^2}(\mu,K) = \eta_{\chi^2}(\mu,K).\end{align}
 We remark that these bounds hold for functions $\varphi$ such that $\varphi(1)=0$ or, equivalently, when the divergence $D_\varphi(\nu\|\mu)$ is defined to be $\mu\left(\varphi\left(\frac{d\nu}{d\mu}\right)\right)-\varphi(1)$. For general convex functions $\varphi$, as well as for R\'enyi's divergences, the DPI holds and SDPI constants are still defined analogously, however one cannot use common techniques to bound said quantities. The following counter-example highlights the issue.
 \begin{example}[Counter-example for Hellinger integrals and R\'enyi's divergences] \label{ex:counterExDalpha}
     Let $\nu=(1/3,2/3)$ and $K_1=\text{BSC}(1/3)$. Then, the stationary distribution $\pi$ is given by $(1/2,1/2)$ and $\nu K_1= (5/9,4/9).$ A direct calculation gives that $H_2(\nu K_1\|\pi K_1) = \frac{82}{81}$ and $H_2(\nu\|\pi) = \frac{10}{9}$. Moreover, if $K=\text{BSC}(\lambda)$, one has that $\eta_{TV}(K)=|1-2\lambda|$ (see~\cite[Remark 3.1]{sdpiRaginsky}  and~\Cref{eq:etaTV}). Thus,  
        \begin{equation}
        \frac{H_2(\nu K_1\|\pi K_1)}{H_2(\nu\|\pi)} = \frac{41}{45} > \eta_{TV}(K_1) = \frac13, \end{equation}
        which means that the inequality \eqref{eq:etaTV} is violated. This is due to the fact that $\varphi_2(x)=x^2$ is not equal to $0$ at $x=1$. In fact, 
        renormalising $H_2$ leads to the $\chi^2$-divergence, which satisfies \begin{equation}
        \frac{H_2(\nu K_1\|\pi K_1)-1}{H_2(\nu\|\pi)-1}=\frac{\chi^2(\nu K_1\|\pi K_1)}{\chi^2(\nu \|\pi )}=\frac19 < \eta_{TV}(K_1) = \frac13.\end{equation}
        
  Similarly, let $K_2=\text{BSC}(1/5)$, which gives that $\eta_{TV}(K_2)=3/5$. Consider now $D_\alpha(K_2(\cdot|0)\|\pi)=D_\alpha(\delta_0 K_2 \|\pi K) = \frac{1}{\alpha-1} \log ( 2^{1-\alpha}( 0.2^\alpha +(0.8)^\alpha))$. Moreover, $D_\alpha(\delta_0\| \pi) = \log(2)$. Thus, by setting $\alpha=6$, one has that
     \begin{equation}
         \eta_{D_\alpha}(K_2) > \frac{D_\alpha(\delta_0 K_2 \|\pi K_2)}{D_\alpha(\delta_0\| \pi)} = 0.6138 > \eta_{TV}(K_2) = 0.6,
     \end{equation}
     which violates again the inequality \eqref{eq:etaTV}.
 \end{example}
    \section{Main result}\label{sec:mainResult}
\begin{theorem}\label{thm:main}
    Let $\Pm_{X^n}$ be the joint distribution of $(X_1,\ldots, X_n)$, $\Pm_{X_i}$ the marginal corresponding to $X_i$, and $\Pm_{\bigotimes_{i=1}^n X_i}$ the joint measure induced by the product of the marginals.     
  If $\Pm_{X^n}\ll\Pm_{\bigotimes_{i=1}^n X_i}$, for any function $f$ satisfying~\Cref{eq:mcDiarmidAssumption}, any $t>0$ and 
  $\alpha>1$, 
  one has
    \begin{align}           
        \Pm_{X^n}\left(\left\lvert f-\Pm_{\bigotimes_{i=1}^n X_i}(f)\right\rvert \geq t \right)&\leq 2^{\frac1\beta} \exp\left(\frac{-2t^2}{\beta \sum_{i=1}^n c_i^2}\right)H_\alpha^\frac1\alpha(\Pm_{X^n}\|\Pm_{\bigotimes_{i=1}^n X_i}) .\label{eq:generalTheorem}
    \end{align}
    Moreover, one can further upper-bound~\Cref{eq:generalTheorem} as follows for a general measure $\Pm_{X^n}$ 
    \begin{equation}
        \Pm_{X^n}\left(\left\lvert f-\Pm_{\bigotimes_{i=1}^n X_i}(f)\right\rvert \geq t \right) \leq 2^{\frac1\beta} \exp\left(\frac{-2t^2}{\beta \sum_{i=1}^n c_i^2}\right)  \cdot\prod_{i=2}^n \max_{x^{i-1}} H_\alpha^\frac1\alpha(\Pm_{X_i|X^{i-1}=x^{i-1}}\|\Pm_{X_i}),\label{eq:simplerTheorem2} 
    \end{equation}
    while, if $(X_1,\ldots, X_n)$ are Markovian under $\Pm_{X^n}$, \textit{i.e.}, $\Pm_{X_i|X^{i-1}} = \Pm_{X_i|X_{i-1}}$ almost surely, then the following holds:
    \begin{equation}
       \Pm_{X^n}\left(\left\lvert f-\Pm_{\bigotimes_{i=1}^n X_i}(f)\right\rvert \geq t \right) \leq 2^{\frac1\beta} \exp\left(\frac{-2t^2}{\beta \sum_{i=1}^n c_i^2}\right)\left(\prod_{i=2}^nH^\alpha_i\right)^\frac1\alpha,\label{eq:generalTheoremMarkov}
    \end{equation}
    with $\beta=\alpha/(\alpha-1)$, $H_i^\alpha = \Pm_{X_{i-1}}^\frac{1}{\beta_{i-1}}\left(H_{\alpha\alpha_i}^\frac{\beta_{i-1}}{\alpha_i}(\Pm_{X_i|X_{i-1}}\|\Pm_{X_i})\right)$, $\alpha_i>1$ for $i\geq 0$, $\beta_0=1$, $\alpha_n=1$ , and $\beta_i = \alpha_i/(\alpha_i-1)$. 
    \end{theorem}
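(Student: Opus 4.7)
The plan is to exploit a change of measure that reduces the dependent problem to the independent one, paying a multiplicative price in the form of a Hellinger integral, and then to bound that price by tensorisation.

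For the first bound in \Cref{eq:generalTheorem}, I would write, for the event $A = \{|f - \Pm_{\bigotimes_{i=1}^n X_i}(f)| \geq t\}$,
\begin{equation*}
\Pm_{X^n}(A) = \int \mathbbm{1}_A \, \frac{d\Pm_{X^n}}{d\Pm_{\bigotimes_{i=1}^n X_i}} \, d\Pm_{\bigotimes_{i=1}^n X_i},
\end{equation*}
which is legitimate by the absolute continuity hypothesis. Applying H\"older's inequality with exponents $\alpha$ and $\beta = \alpha/(\alpha-1)$ splits this into $H_\alpha^{1/\alpha}(\Pm_{X^n}\|\Pm_{\bigotimes_{i=1}^n X_i}) \cdot \Pm_{\bigotimes_{i=1}^n X_i}(A)^{1/\beta}$ by the identity~\Cref{eq:hellignerRenyi}. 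Under the product measure $\Pm_{\bigotimes_{i=1}^n X_i}$, the coordinates are independent with the same marginals, so the mean $\Pm_{\bigotimes_{i=1}^n X_i}(f)$ is preserved and the classical McDiarmid inequality~\Cref{eq:mcDiarmidStatement} applies, yielding $\Pm_{\bigotimes_{i=1}^n X_i}(A) \leq 2 \exp(-2t^2/\sum_i c_i^2)$. Raising to the $1/\beta$ gives~\Cref{eq:generalTheorem}.

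For \Cref{eq:simplerTheorem2}, I would reduce the global Hellinger integral to a product of conditional ones via the chain rule of Radon--Nikodym derivatives,
\begin{equation*}
\frac{d\Pm_{X^n}}{d\Pm_{\bigotimes_{i=1}^n X_i}}(x^n) = \prod_{i=2}^n \frac{d\Pm_{X_i|X^{i-1}=x^{i-1}}}{d\Pm_{X_i}}(x_i).
\end{equation*}
Raising to the $\alpha$ and integrating against $\Pm_{\bigotimes_{i=1}^n X_i}$, one can integrate out $x_n$ first (against $\Pm_{X_n}$), upper-bounding the innermost conditional Hellinger integral $H_\alpha(\Pm_{X_n|X^{n-1}=x^{n-1}}\|\Pm_{X_n})$ by its maximum over $x^{n-1}$, then peel off the next factor in the same way. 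After $n-1$ such passes one obtains the product of conditional-Hellinger maxima, raised to $1/\alpha$ after taking the overall root.

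The Markov inequality~\Cref{eq:generalTheoremMarkov} is the delicate step and the main obstacle. Here the conditional factors reduce to $\Pm_{X_i|X_{i-1}}$ by the Markov property, but taking a crude supremum as above would discard the contractive behaviour of the kernels. Instead, one has to apply H\"older's inequality iteratively with a telescoping sequence of orders $\alpha_i$ (with $\alpha_0 = \alpha$, $\alpha_n = 1$, and conjugates $\beta_i = \alpha_i/(\alpha_i - 1)$), so that after peeling off the $i$-th factor the remaining power on the conditional density becomes compatible with the next step. Concretely, once $x_{i+1},\dots,x_n$ have been integrated, the residual dependence on $x_{i-1}$ appears through $\left(H_{\alpha\alpha_i}(\Pm_{X_i|X_{i-1}}\|\Pm_{X_i})\right)^{\beta_{i-1}/\alpha_i}$ integrated against $\Pm_{X_{i-1}}$, i.e.\ the quantity $H_i^\alpha$ defined in the statement; I would verify this structure by induction on $i$ from $n$ down to $2$. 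The tensorisation properties of the Hellinger integral collected in~\Cref{app:tensorisation} make this bookkeeping routine, and combining the resulting bound on $H_\alpha(\Pm_{X^n}\|\Pm_{\bigotimes_{i=1}^n X_i})$ with the change-of-measure argument of the first step yields~\Cref{eq:generalTheoremMarkov}.
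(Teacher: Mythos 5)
Your proposal follows essentially the same route as the paper's proof: the change of measure $\dd\Pm_{X^n} = \frac{d\Pm_{X^n}}{d\Pm_{\bigotimes_{i=1}^n X_i}}\dd\Pm_{\bigotimes_{i=1}^n X_i}$ combined with H\"older's inequality and McDiarmid's inequality under the product measure for~\Cref{eq:generalTheorem}, and the chain-rule factorisation of the Radon--Nikodym derivative with iterated H\"older (taking suprema, i.e.\ $\beta_i\to\infty$, for~\Cref{eq:simplerTheorem2}, and the telescoping exponents $\alpha_i,\beta_i$ for~\Cref{eq:generalTheoremMarkov}). The only cosmetic difference is that you peel the conditional factors from the back while the paper disintegrates from the front, which yields the same bound.
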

    The proof of~\Cref{thm:main} is in~\Cref{app:proofMainThm}. 
    %
    If the function $f$ satisfies~\Cref{eq:mcDiarmidAssumption} with $c_i=\frac1n$, like in the case of the empirical mean, one obtains 
    \begin{equation}
        \Pm_{X^n}\left(\left\lvert f-\Pm_{\bigotimes_{i=1}^n X_i}(f)\right\rvert \geq t \right) \leq 2^\frac1\beta \exp\left(-n\left(\frac{2t^2}{\beta}-\frac{1}{n\alpha}\log H_\alpha(\Pm_{X^n}\|\Pm_{\bigotimes_{i=1}^n X_i})\right)\right).
    \end{equation}
    This means that if \begin{equation}
    t > \sqrt{\frac{\beta}{2n\alpha}\ln H_\alpha(\Pm_{X^n}\|\Pm_{\bigotimes_{i=1}^n X_i})}, \label{eq:thresholdEta}
    \end{equation}
    then~\Cref{thm:main} guarantees an exponential decay. If the sign of the inequality~\eqref{eq:thresholdEta} is reversed, then the bound actually becomes trivial, for $n$ large enough. The threshold behavior just described characterises the main difference of this bound with respect to existing approaches: while there are no restrictive assumptions required (other than absolute continuity of the measures at play), the bound can be trivial if the joint distribution is ``too far'' from the product of the marginals. In contrast, other approaches, like the one described in~\cite{dependentViaMartingale}, do not generally exhibit such a behaviour. Next, we will characterize the key quantity $H_\alpha(\Pm_{X^n}\|\Pm_{\bigotimes_{i=1}^n X_i})$ as a function of $n$ 
    in the concrete examples of \Cref{sec:examples}. Before doing that, a 
    few additional remarks are in order. 
    \begin{remark}[Simplification of the bound]
        The expression on the RHS of~\Cref{eq:generalTheoremMarkov} can be complicated to compute, especially due to the presence of $\{\alpha_i\}_{i=2}^\infty$. 
        Making a specific choice, which meaningfully reduces the number of parameters (\textit{i.e.}, taking $\alpha_i\to 1$ for every $i\geq 2$),~\Cref{eq:generalTheoremMarkov} boils down to the following, simpler, expression:
   \begin{equation}           
         \Pm_{X^n}\left(\left\lvert f-\Pm_{\bigotimes_{i=1}^n X_i}(f)\right\rvert \geq t \right) \leq 2^{\frac1\beta} \exp\left(\frac{-2t^2}{\beta \sum_{i=1}^n c_i^2}\right)\cdot\prod_{i=2}^n \max_{x_{i-1}} H_\alpha^\frac1\alpha(\Pm_{X_i|X_{i-1}=x_{i-1}}\|\Pm_{X_i}).\label{eq:simplerTheorem}
    \end{equation}
    Moreover,~\Cref{eq:simplerTheorem} can be re-written as follows:
    \begin{equation}
        \Pm_{X^n}\left(\left\lvert f-\Pm_{\bigotimes_{i=1}^n X_i}(f)\right\rvert \geq t \right) \leq 2^{\frac1\beta} \exp\left(\frac{1}{\beta}\left(\frac{-2t^2}{\sum_{i=1}^n c_i^2}+\sum_{i=2}^n \max_{x_{i-1}} D_\alpha(\Pm_{X_i|X_{i-1}=x_{i-1}}\|\Pm_{X_i})\right)\right). \label{eq:simplerTheoremDalpha}
    \end{equation}
    ~\Cref{eq:simplerTheoremDalpha} allows to exploit the SDPI coefficient for $D_\alpha$ (see~\Cref{rmk:SDPIDal}), which in some settings improves upon leveraging~\Cref{eq:simplerTheorem} along with hypercontractivity, see~\Cref{app:comparisonHyperContrDalpha}.
    \end{remark}
    \begin{remark}[Concentration without independence for a general event $E$]
\Cref{thm:main} can be proved in more generality. Indeed, for any measurable event $E$, one can say that, for every $\alpha>1$, \begin{equation}
    \Pm_{X^n}\left(E\right) \leq \Pm^\frac1\beta_{\bigotimes_{i=1}^n X_i}(E)\cdot H_\alpha(\Pm_{X^n}\|\Pm_{\bigotimes_{i=1}^n X_i}).\label{eq:thmGeneralE}
\end{equation} Thus, our framework is \emph{not} restricted to a McDiarmid-like setting, but it can be used to generalise \emph{any} concentration of measure approach to dependent random variables. The idea is that concentration holds when random variables are independent, namely, $\Pm_{\bigotimes_{i=1}^n X_i}(E)$ decays exponentially in $n$ under suitable assumptions. Then,~\Cref{eq:thmGeneralE} shows that a similar 
exponential decay 
holds also in the presence of dependence, as long as the measure of the joint is not ``too far'' from the product of the marginals. The ``distance'' between joint and product of the marginals is captured by the Hellinger integral $H_\alpha$. In particular, if the joint measure \emph{corresponds} to the product of the marginals, then $H_\alpha^\frac1\alpha(\Pm_{X^n}\|\Pm_{\bigotimes_{i=1}^nX_i})=1$ for every $\alpha$. Thus, taking the limit of $\alpha\to\infty$, one recovers 
\begin{equation}
\Pm_{\bigotimes X_i}(E)= \Pm_{X^n}\left(E \right) \leq \Pm_{\bigotimes X_i}(E).
\end{equation}
\end{remark}
\begin{remark}[Choice of $\alpha$]\label{rmk:tradeOff}
On the RHS of both~\Cref{eq:generalTheorem,eq:simplerTheorem}, the probability term is raised to the power $\frac{\alpha-1}{\alpha}$ and multiplied by the $\alpha$-norm of the Radon-Nikodym derivative. On the one hand, as $\alpha$ grows, the $\alpha$-norm grows as well, which increases the Hellinger integral; on the other hand, as $\alpha$ grows, $\frac{\alpha-1}{\alpha}$ tends to $1$, which reduces the probability. This introduces a trade-off between the two quantities that renders the optimisation over $\alpha$ non-trivial. We highlight that considering the limit of $\alpha\to\infty$ provides the fastest exponential decay and it recovers the probability for independent random variables. This has the cost of rendering the multiplicative constant larger, and we will discuss in detail the choice of $\alpha$ in the various examples of~\Cref{sec:examples}.\label{rmk:generalEvents}
\end{remark}

\begin{remark}[Tensorisation and McDiarmid's] Note that~\Cref{eq:generalTheorem} requires only absolute continuity as an assumption.~\Cref{eq:simplerTheorem,eq:simplerTheorem2} instead leverage tensorisation properties of $H_\alpha$. These tensorisation properties are particularly suited for Markovian settings as~\Cref{eq:simplerTheorem} shows. However, in the general case, one can still 
reduce 
$H_\alpha(\Pm_{X^n}\|\Pm_{\bigotimes_{i=1}^n X_i})$ (a divergence between $n$-dimensional measures) to $n$ one-dimensional objects, see \Cref{app:tensorisation} for details about the tensorisation of both the Hellinger integral $H_\alpha$ and R\'enyi's $\alpha$-divergence $D_\alpha$ and thus retrieve~\Cref{eq:simplerTheorem2}.
Note that~\Cref{eq:generalTheorem}
gives a natural generalisation of concentration inequalities to the case of arbitrarily dependent random variables (just like~\Cref{eq:generalTheoremMarkov}  generalises them to Markovian settings). Indeed, if $\Pm_{X^n}=\Pm_{\bigotimes_i X_i}$, then taking the limit of $\beta\to 1$ in both~\Cref{eq:generalTheorem} and~\Cref{eq:generalTheoremMarkov}, one recovers the classical concentration bound for independent random variables (see the discussion in~\Cref{rmk:generalEvents} recalling that $\beta=\alpha/(\alpha-1)$). 
\end{remark}
    \section{Applications}\label{sec:examples}
    Let us now apply~\Cref{thm:main} to four settings:
    \begin{enumerate}
        \item In~\Cref{sec:example1}, we consider a \emph{discrete-time Markovian setting}. Here, we specialise~\Cref{thm:main}  leveraging the \mbox{(hyper-)contraction} properties of the Markov kernel along with the discrete structure of the problem, 
        thus showing that 
        in certain parameter regimes our bound fares better than what the state of the art can provide;
        \item In~\Cref{sec:example2}, we consider a \emph{non-contracting Markovian setting} that  does \emph{not} admit a stationary distribution. Both these properties do not allow the application of most of the existing work in the literature. In contrast, not only our approach can be applied, but it provides exponentially decaying probability bounds, while~\cite[Theorem 1.2]{dependentViaMartingale} can only provide an upper bound that does not vanish as $n$ grows;
        \item In~\Cref{sec:example3}, we consider a \emph{non-Markovian setting} where the entire past of the process influences each step.  Here, to the best of our knowledge, we provide the first bound that exponentially decays in $n$ and has a closed-form expression, while existing approaches either cannot be employed or require the computation of complicated quantities (\textit{e.g.},~\Cref{eq:totalVarationDependentCase,eq:martonDependentCase});
        \item Finally, in~\Cref{sec:MCMC}, we apply~\Cref{thm:main} to provide error bounds on Markov Chain Monte Carlo methods. Similarly to the other settings, we propose a regime of parameters in which our results fare better and, consequently, provide an improved lower bound on the minimum burn-in period necessary to achieve a certain accuracy in MCMC.
    \end{enumerate}
  We will hereafter assume, for simplicity of exposition, that $c_i=1/n$ in~\Cref{eq:mcDiarmidAssumption} like in the case of the empirical mean. All the results hold for general $c_i$'s, but the expressions and comparisons would become more cumbersome.
 
    \subsection{Discrete-Time Markov chains} \label{sec:example1}
    
    Consider a discrete-time setting and a Markov chain $(X_n)_{n\in\mathbb{N}}$ determined by a sequence of transition matrices $(K_n)_{n\in\mathbb{N}}$. Assume that $X_1\sim P_1$ and let $X_i$ denote the random variable whose distribution is given by $P_1 K_1\ldots K_{i-1}$.\footnote{One can also see $X_i$ as the outcome of $X_{i-1}$ after being passed through the channel $K_{i-1}$.} 
    \begin{theorem}\label{thm:generalDiscrete}
        For $i\geq 1$, suppose $K_i$ is a discrete-valued Markov kernel, and let $\gamma^\star_i(\alpha)$ be the smallest parameter making it hyper-contractive, see~\Cref{sec:markovPreliminaries}.
        Then, for every function $f$ satisfying~\Cref{eq:mcDiarmidAssumption} with $c_i=\frac1n$ and every $\alpha>1$, 
        \begin{equation}
                \Pm_{X^n}(\left\lvert f-\Pm_{\bigotimes_{i=1}^n X_i}(f)\right\rvert \geq t) \leq 2^\frac1\beta \exp\left(-\frac{2nt^2}{\beta}+\sum_{i=1}^{n-1} \left( \log\left\lVert K_i^{\leftarrow}\right\rVert_{\alpha\to\gamma^\star_i(\alpha)} - \frac{1}{\bar{\gamma}_i^\star(\alpha)}\min_{j\in \text{\textit{supp}}(\Pm_{i})} \log P_i(j)\right)\right)\label{eq:generalResultHypercontractive}.
        \end{equation}
        Moreover, if the Markov kernel is time-homogeneous, \textit{i.e.}, $K_i=K$ for every $i\geq 1$, then 
        \begin{align}
            \Pm_{X^n}&(\left\lvert f-\Pm_{\bigotimes_{i=1}^n X_i}(f)\right\rvert \geq t)\notag \\
            &\leq 2^\frac1\beta \exp\left(-\frac{2nt^2}{\beta}+(n-1) \log\left\lVert K^{\leftarrow}\right\rVert_{\alpha\to\gamma^\star_K(\alpha)} - \frac{1}{\bar{\gamma}^\star_K(\alpha)}\sum_{i=1}^{n-1}\left(\min_{j\in \text{\textit{supp}}(\Pm_{i})} \log P_i(j)\right)\right)\label{eq:generalResultDiscreteHomogeneousHyper} \\
            &\leq 2^\frac1\beta \exp\left(-\frac{2nt^2}{\beta}+(n-1) \log\left\lVert K^{\leftarrow}\right\rVert_{\alpha\to {\gamma}^\star_K(\alpha)} - \frac{n-1}{\bar{\gamma}^\star_K(\alpha)}\left(\min_{i=1,\ldots,(n-1)}\min_{j\in \text{\textit{supp}}(\Pm_{i})} \log P_i(j)\right)\right)\label{eq:generalResultDiscreteHomogeneous2Hyper}.
        \end{align}
    In the above equations $\bar{\gamma}_i^\star(\alpha)$ and $\bar{\gamma}_K^\star(\alpha)$ denote the H\"older conjugates of, respectively, $\gamma_i^\star(\alpha)$ and $\gamma_K^\star(\alpha)$.
    \end{theorem}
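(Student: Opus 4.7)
The plan is to reduce everything to the one-dimensional Hellinger integrals provided by the Markovian specialisation~\eqref{eq:simplerTheorem} of~\Cref{thm:main}, rewrite each such Hellinger integral as an $L^\alpha$-norm, and then invoke hypercontractivity of the backward kernel $K_i^{\leftarrow}$ together with a short discrete calculation. Concretely, starting from~\eqref{eq:simplerTheorem}, the task reduces to upper bounding, for each $i\ge 2$, the quantity $\max_{x_{i-1}} H_\alpha^{1/\alpha}(\Pm_{X_i|X_{i-1}=x_{i-1}}\|\Pm_{X_i})$ and then converting the resulting product into an additive contribution in the exponent.

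Fixing $\mu=P_{i-1}$ and $K=K_{i-1}$, we have $\Pm_{X_i|X_{i-1}=x_{i-1}}=\delta_{x_{i-1}}K$ and $\Pm_{X_i}=\mu K$, so~\eqref{eq:hellignerRenyi} gives $H_\alpha^{1/\alpha}(\delta_x K\|\mu K)=\left\lVert d(\delta_x K)/d(\mu K)\right\rVert_{L^\alpha(\mu K)}$. The crucial identity, obtained by a direct computation from the explicit formula for the backward kernel in~\Cref{sec:markovPreliminaries}, is
\begin{equation*}
\frac{d(\delta_x K)}{d(\mu K)}(y) \;=\; \frac{K(y|x)}{\mu K(y)} \;=\; \frac{K^{\leftarrow}_{\mu}(x|y)}{\mu(x)} \;=\; K^{\leftarrow}_{\mu}\!\left(\frac{d\delta_x}{d\mu}\right)\!(y).
\end{equation*}
Applying the definition of the hypercontractive parameter $\gamma^\star_i(\alpha)$ of the backward kernel then yields
\begin{equation*}
H_\alpha^{1/\alpha}(\delta_x K\|\mu K) \;\le\; \bigl\lVert K_i^{\leftarrow}\bigr\rVert_{\alpha\to\gamma^\star_i(\alpha)}\cdot\bigl\lVert d\delta_x/d\mu\bigr\rVert_{L^{\gamma^\star_i(\alpha)}(\mu)}.
\end{equation*}

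A one-line discrete computation gives $\lVert d\delta_x/d\mu\rVert_{L^\gamma(\mu)}^{\gamma} = \mu(x)^{1-\gamma}$, equivalently $\log\lVert d\delta_x/d\mu\rVert_{L^\gamma(\mu)} = -\frac{1}{\bar\gamma}\log\mu(x)$, with $\bar\gamma=\gamma/(\gamma-1)$ the H\"older conjugate of $\gamma$. Taking logarithms, maximising over $x\in\mathrm{supp}(P_i)$, and summing over $i=1,\ldots,n-1$ converts the product in~\eqref{eq:simplerTheorem} into the exponent of~\eqref{eq:generalResultHypercontractive}. Specialising to $K_i=K$ for every $i$ and factoring constant terms immediately produces~\eqref{eq:generalResultDiscreteHomogeneousHyper}, and the coarser bound~\eqref{eq:generalResultDiscreteHomogeneous2Hyper} follows from the elementary estimate $\sum_{i=1}^{n-1}\min_j \log P_i(j)\ge (n-1)\min_{i}\min_{j}\log P_i(j)$.

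The main technical obstacle I anticipate is the clean identification of $d(\delta_x K)/d(\mu K)$ with the action of the backward operator $K^{\leftarrow}_{\mu}$ on the Radon--Nikodym derivative $d\delta_x/d\mu$: this is precisely what converts a bound on the divergence between a conditional and a marginal into a statement about the hypercontractivity of $K^\leftarrow$, and it requires tracking the reference measures on the two sides of the $L^\alpha$-norm carefully, since the hypercontractivity constant $\lVert K^\leftarrow\rVert_{\alpha\to\gamma^\star}$ is defined with respect to $\mu K$ in the target and $\mu$ in the source. Everything else is routine: the discrete evaluation of the $L^\gamma(\mu)$-norm of $d\delta_x/d\mu$, the maximisation over the support, and the bookkeeping of the resulting additive contributions along the chain.
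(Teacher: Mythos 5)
Your proposal is correct and follows essentially the same route as the paper: reduce via the Markovian form of \Cref{thm:main} to bounding $\max_{x_{i-1}} H_\alpha^{1/\alpha}(\delta_{x_{i-1}}K_{i-1}\|\Pm_{X_{i-1}}K_{i-1})$, pass to the backward kernel via $\frac{d(\delta_x K)}{d(\mu K)} = K^\leftarrow_\mu\left(\frac{d\delta_x}{d\mu}\right)$ (which the paper imports as a cited lemma and you verify by direct computation in the discrete case), apply hypercontractivity, and evaluate $H_{\gamma}^{1/\gamma}(\delta_x\|\mu)=\mu(\{x\})^{-1/\bar\gamma}$ before taking the product over the chain. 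The only cosmetic difference is that you make the reference measures in the $\alpha\to\gamma^\star$ norm explicit, which the paper leaves implicit.
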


\begin{proof}
        The main object one has to bound, according to~\Cref{thm:main}, is the following:
    \begin{equation}
    \max_{x_{i-1}} H_\alpha(\Pm_{X_i|X_{i-1}=x_{i-1}}\|\Pm_{X_i}), \text{\hspace{0.3em} with } i\geq 2.
\end{equation}
From the properties of the Markov kernel, one has that
$P_{X_i} = P_{X_{i-1}} K_{i-1}$. Furthermore, $P_{X_i|X_{i-1}=x_{i-1}}$ can be seen as $\delta_{x_{i-1}} K_{i-1}$ where $\delta_{x_{i-1}}$ is a Dirac-delta measure centered at $x_{i-1}$. Thus, recalling the definition of 
$\gamma^\star_{i-1}(\alpha)$ from 
\Cref{sec:markovPreliminaries}, 
\begin{align}
    H^\frac{1}{\alpha}_\alpha(\Pm_{X_i|X_{i-1}=x_{i-1}}\| \Pm_{X_i})  &= H^\frac{1}{\alpha}_\alpha(\delta_{x_{i-1}}K_{i-1}\| \Pm_{X_{i-1}}K_{i-1})\\ &= \left\lVert \frac{d\delta_{x_{i-1}}K_{i-1}}{d\Pm_{X_{i-1}}K_{i-1}}\right\rVert_{L^\alpha(\Pm_{X_{i-1}}K_{i-1})}  \\
    &\leq \left\lVert K_{i-1}^\leftarrow \right\rVert_{\alpha\to\gamma^\star_{i-1}(\alpha)} H_{\gamma^\star_{i-1}(\alpha)}^\frac{1}{\gamma^\star_{i-1}(\alpha)}(\delta_{x_{i-1}}\|\Pm_{X_{i-1}})\label{eq:reverseChannelRND},
\end{align}
where~\Cref{eq:reverseChannelRND} follows from~\cite[Lemma A.1]{sdpiRaginsky}. 
Moreover, for every $\kappa>1$,
 \begin{equation}
     H^\frac1\kappa_\kappa(\delta_{x_{i-1}}\|\Pm_{X_{i-1}}) = \Pm_{X_{i-1}}(\{x_{i-1}\})^\frac{1-\kappa}{\kappa} = \Pm_{X_{i-1}}(\{x_{i-1}\})^{-\frac{1}{\bar{\kappa}}},
 \end{equation}
 where $\bar{\kappa}$ denotes the H\"older's conjugate of $\kappa$ and $\Pm_{X_{i-1}}(\{x_{i-1}\})$ the measure that $\Pm_{X_{i-1}}$ assigns to the point $x_{i-1}$.
 Thus, the following sequence of steps, along with~\Cref{eq:simplerTheorem}, concludes the argument:
 \begin{align}
      \prod_{i=2}^n \max_{x_{i-1}} H^\frac1\alpha_\alpha(\Pm_{X_i|X_{i-1}=x_{i-1}}\|\Pm_{X_i}) & \leq \prod_{i=2}^{n} \max_{x_{i-1}} \left\lVert K^\leftarrow_{i-1} \right\rVert_{\alpha\to\gamma^\star_{i-1}(\alpha)} H^\frac{1}{\gamma^\star_{i-1}(\alpha)}_{\gamma^\star_{i-1}(\alpha)}(\delta_{x_{i-1}}\|\Pm_{X_{i-1}}) \\
      & = \left(\prod_{i=2}^{n} \left\lVert K^\leftarrow_{i-1} \right\rVert_{\alpha\to\gamma^\star_{i-1}(\alpha)} \right)\left(\prod_{i=2}^n  \max_{x_{i-1}}\left(\Pm_{X_{i-1}}(\{x_{i-1}\})^{-\frac{1}{\bar{\gamma}^\star_{i-1}(\alpha)}}\right)\right)\\
      &= \left(\prod_{i=1}^{n-1} \left\lVert K^\leftarrow_i \right\rVert_{\alpha\to\gamma^\star_{i}(\alpha)} \right)\left(\prod_{i=1}^{n-1}  \left(\min_{x_i}\Pm_{X_i}(\{x_i\})\right)^{-\frac{1}{\bar{\gamma}^\star_{i}(\alpha)}}\right).
 \end{align}
 Moreover, given the discrete setting, one can replace the measure $\Pm_{X_i}$ with the corresponding pmf which is denoted by $P_i$.
    \end{proof}
        If 
        the Markov kernel $K_i$ is only contractive (and not hyper-contractive), then 
        $\gamma^\star_i(\alpha)=\alpha$ and $\bar\gamma^\star_i(\alpha)=\beta$, which allows to simplify \Cref{eq:generalResultHypercontractive,eq:generalResultDiscreteHomogeneousHyper,eq:generalResultDiscreteHomogeneous2Hyper}. 
   In this case, 
if 
   \begin{align}
       t^2 &\geq \frac{\beta}{2}\frac{n-1}{n}\log\left\lVert K^\leftarrow \right\rVert_{\alpha\to\alpha} - \frac{n-1}{2n} \left(\min_i \min_j \log P_i(j)\right)\\
       &= (1+o_n(1))\left(\frac{\beta}{2}\log\left\lVert K^\leftarrow \right\rVert_{\alpha\to\alpha}^\beta -\frac12  \left(\min_i \min_j \log P_i(j)\right) \right), 
   \end{align}
   then \Cref{thm:generalDiscrete} gives 
   exponential (in $n$) concentration even in the case of dependence. 
   
    \begin{remark}[Hypercontractivity, SDPI and R\'enyi's divergences]\label{rmk:SDPIDal}
    Another perspective naturally 
    stems from~\Cref{eq:simplerTheoremDalpha}. Indeed, similarly to~\Cref{thm:generalDiscrete}, one has 
        \begin{align}
        \Pm_{X^n}(\left\lvert f-\Pm_{\bigotimes_{i=1}^n X_i}(f)\right\rvert \geq t)  &\leq 2^\frac1\beta \exp\left(-\frac1\beta\left(2nt^2-\sum_{i=2}^n \max_{x_{i-1}} D_\alpha(\Pm_{X_i|X_{i-1}=x_{i-1}}\|\Pm_{X_i})\right)\right)\\
        &\leq 2^\frac1\beta \exp\left(-\frac1\beta\left(2nt^2-\eta_\alpha(K)  \sum_{i=2}^n\max_{x_{i-1}} D_\alpha(\delta_{x_{i-1}}\|\Pm_{X_{i-1}})\right)\right) \label{eq:SDPIDalpha} \\
        & = 2^\frac1\beta \exp\left(-\frac1\beta\left(2nt^2+\eta_\alpha(K)  \sum_{i=2}^n\min_{x_{i-1}} \log P_{i-1}(x_{i-1})\right)\right)\label{eq:SDPIDalphaBound}.
        \end{align}
    We remark that 
    $D_\alpha$ can also be hyper-contractive with respect to some Markovian operators, meaning that in~\Cref{eq:SDPIDalpha} for instance, one could consider $D_\gamma(\delta_{x-1}\|\Pm_{X_{i-1}})$ with $\gamma < \alpha$ (this is equivalent to hyper-contractivity of Markov operators,~\cite[Section IV]{logSobolevContraction}). One such example is the Ornstein–Uhlenbeck channel with noise parameter $t$, cf.~\cite[Eq.  3.2.37]{concentrationMeasureII}, for which one can prove hyper-contractivity with respect to $D_\alpha$~\cite[Theorem 3.2.3]{concentrationMeasureII}. Moreover, in some settings, leveraging SDPIs for $D_\alpha$ can provide an improvement over~\Cref{thm:generalDiscrete}, see \Cref{app:comparisonHyperContrDalpha} for a detailed
    comparison.
    \end{remark}
   
    We now compare the concentration bound provided by \Cref{thm:generalDiscrete} with existing bounds in the literature. In this section, the comparison concerns a general Markov kernel, and the explicit calculations for a binary kernel are deferred to \Cref{app:ex1}. 

    \subsubsection{Comparison with \cite[Theorem 1.2]{dependentViaMartingale}}  Let us consider the same setting as in~\Cref{thm:generalDiscrete}. Then, \cite{dependentViaMartingale} gives %
    \begin{equation}
        \mathbb{P}(|f-\Pm_{X^n}(f)|\geq t ) \leq 2 \exp\left(-\frac{nt^2}{2 M_n^2}\right), \label{eq:kontorovichGeneral}
    \end{equation}
    where $M_n = \max_{1\leq i\leq n-1} \left(1+\sum_{j=i}^{n-1} \prod_{k=i}^j \eta_{KL}(K_k) \right)$ and we recall that $\eta_{KL}(K_i)=\sup_{x,\hat{x}}TV(K_i(\cdot|x),K_i(\cdot|\hat{x}))$ is the contraction coefficient of the Markov kernel $K_i$. First, note that, 
    if the random variables are independent and thus $\Pm_{X^n}=\Pm_{\bigotimes X_i}$, then~\Cref{eq:kontorovichGeneral} reduces to $$\Pm_{X^n}(E) \leq 2 \exp(-nt^2/2),$$ while~\Cref{thm:generalDiscrete} with $\gamma^\star_K(\alpha)=\alpha\to\infty$ and $\bar\gamma^\star_K(\alpha)=\beta\to 1$ recovers McDiarmid's inequality with the correct constant in front of $n$, \textit{i.e.},
    $$\Pm_{X^n}(E) \leq 2 \exp(-2nt^2).$$
    
    Assume now that the Markov kernel is time-homogeneous and has a contraction coefficient $\eta_{TV}(K) < 1$. Then, $M_n = \max_{1\leq i\leq n-1} \frac{1-\eta_{TV}(K)^{n+1-i}}{1-\eta_{TV}(K)}= \frac{1-\eta_{TV}(K)^n}{1-\eta_{TV}(K)}$. For compactness, define $P_{i^\star}(j^\star):=\min_i\min_j P_i(j)$. Making a direct comparison, one has that if 
    \begin{align}
        t^2 &> \frac{n-1}{n} \left(\frac{2M_n^2}{4M_n^2-\beta}\right)\log\frac{\left\lVert K^\leftarrow \right\rVert_{\alpha\to\alpha}^\beta}{P_{i^\star}(j^\star)} \\
        &= (1+o_n(1)) \left(\frac{2}{4-\beta(1-\eta_{TV}(K))^{2}}\right)\log\frac{\left\lVert K ^\leftarrow \right\rVert_{\alpha\to\alpha}^\beta}{P_{i^\star}(j^\star)}, \label{eq:comparisonKontorovichTimeHom}
    \end{align}
    then~\Cref{eq:generalResultDiscreteHomogeneous2Hyper} (with $\gamma^\star_K(\alpha)=\alpha$ and $\bar\gamma^\star_K(\alpha)=\beta$) provides a faster exponential decay than~\Cref{eq:kontorovichGeneral}. The explicit calculations for the special case of a binary kernel are provided in~\Cref{app:ex1Comp1}.

 One can then consider the limit of $\alpha\to\infty$ which renders $\beta\to 1$. On the one hand, this implies a larger multiplicative coefficient (as $H_\alpha$ grows with $\alpha$, see~\Cref{rmk:tradeOff}) and, consequently, it increases the minimum value of $t$ one can consider in~\Cref{eq:comparisonKontorovichTimeHom}. On the other hand, it guarantees a faster exponential decay in~\Crefrange{eq:generalResultDiscreteHomogeneousHyper}{eq:generalResultDiscreteHomogeneous2Hyper}. 
 In fact, as $\beta\to 1$, the RHS of~\eqref{eq:generalResultDiscreteHomogeneous2Hyper} scales as $\exp(-2nt^2(1+o_n(1)))$ for large enough $t$, which matches the behavior of~\Cref{eq:mcDiarmidStatement}. 
 Let us highlight that, to the best of our knowledge, our approach is the \emph{first} to recover the \emph{same exponential decay rate} obtained in the independent case, even in the presence of correlation among the random variables. 
 
We remark that the convergence results in \Cref{thm:generalDiscrete} and \Cref{eq:kontorovichGeneral} are with respect to different constants: \Cref{thm:generalDiscrete} considers the concentration of $f$ around $\Pm_{\bigotimes_{i=1}^n X_i}(f)$, while~\Cref{eq:kontorovichGeneral} around $\Pm_{X^n}(f)$. 
However, given the faster rate of convergence guaranteed by our framework -- for this example and, even more impressively so, for the one in~\Cref{sec:example2} -- the mean of $f$ under the product of the marginals might be regarded as a natural object to consider when proving concentration results for these processes.  
    This hypothesis is corroborated by the approach presented in~\cite{dependentViaStatAndChange}, where concentration for stationary Markov chains is provided around the mean with respect to the stationary measure $\pi$. Indeed, when $X_1\sim \pi$, the product of the marginals reduces to the tensor product of the stationary measure $\pi^{\otimes n}$. 
 %
    To make a direct comparison with~\cite{dependentViaMartingale}, one can leverage~\cite[Proposition 1.8]{ledoux2001concentration} (reproduced in \Cref{app:cmm}) and either reduce both results to concentration bounds around the median, or transform~\Cref{thm:main} in a result on concentration around the mean with respect to $\Pm_{X^n}$. This would introduce additional terms, rendering the comparison cumbersome and outside the scope of this work. We however perform said comparison explicitly in a different setting (see~\Cref{sec:example2}).

     \subsubsection{Comparison with \cite[Theorem 1]{dependentViaStatAndChange}} 
     
\cite{dependentViaStatAndChange} considers a more restricted setting in which $f$ is the sum of $n$ bounded functions that separately act on each of the $n$ random variables, \textit{i.e.}, $f=\sum_{i=1}^n f_i(X_i)$ with $f_i \in [a_i,b_i]$.\footnote{This choice of $f$ in~\Cref{thm:main} transforms the result in a generalisation of Hoeffding's inequality to dependent settings. It is easy to see that if $f_i\in [a_i,b_i]$ then~\eqref{eq:mcDiarmidAssumption} holds with $c_i = (b_i-a_i)$ and the statement follows.} By assuming further that $a_i=0$ and $b_i=\frac1n$ for every $i$ (\textit{e.g.}, empirical mean), we are in a setting in which~\Cref{thm:generalDiscrete} holds as it is. Moreover, the setting in \cite{dependentViaStatAndChange} requires the Markov chain to be time-homogeneous and admit a stationary distribution $\pi$ (notice that none of these assumptions are necessary for Theorems~\ref{thm:main} and~\ref{thm:contractingMarkov} to hold). In this case~\cite[Theorem 1]{dependentViaStatAndChange} gives:
\begin{equation}
    \Pm_{X^n}\left(\left|f -  \pi^{\otimes n}(f)\right| \geq t \right) \leq 2 \exp\left(-\frac{1-\lambda}{1+\lambda}2nt^2\right),\label{eq:generalDiscreteStationary}
\end{equation}
where $(1-\lambda)$ denotes the absolute spectral gap of the Markov chain, see~\cite[Definition 19]{dependentViaStatAndChange}, which characterizes the speed of convergence to the stationary distribution. Comparing with~\Cref{eq:generalResultDiscreteHomogeneous2Hyper} with $\gamma^\star_K(\alpha)=\alpha\to\infty$ and $\bar\gamma^\star_K(\alpha)=\beta\to 1$,\footnote{As mentioned earlier, this optimizes the rate of convergence, at the expense of the value of $t$ from which we obtain an improvement.} one has that if
\begin{equation}
    t^2 > (1+o_n(1))\frac{1+\lambda}{4\lambda}\log\frac{\left\lVert K^\leftarrow \right\rVert_{\infty\to\infty}}{P_{i^\star}(j^\star)},
\end{equation}
then~\Cref{thm:generalDiscrete} provides a faster decay than~\Cref{eq:generalDiscreteStationary}. A more explicit comparison in which the absolute spectral gap is computed for a binary Markov kernel can be found in~\Cref{app:ex1Comp2}.
   
\subsubsection{Comparison with \cite{Marton1996BoundingB}}    
    Let us now derive the corresponding result of concentration around the median in order to compare with~\cite{Marton1996BoundingB}. Leveraging~\cite[Proposition 1.8]{ledoux2001concentration} (see also \Cref{app:cmm}) along with~\Cref{eq:generalResultDiscreteHomogeneous2Hyper} (again, with $\gamma^\star_K(\alpha)=\alpha\to\infty$ and $\bar\gamma^\star_K(\alpha)=\beta\to 1$), we have 
    \begin{equation}       
    \Pm_{X^n}\left(|f-m_f|\geq t\right) 
 \leq 2 \exp \left(-2n\left(t-\sqrt{\frac{\ln4+C_n}{2n}}\right)^2+C_n\right),\label{eq:convergenceMedianGeneral}
    \end{equation}
    where $C_n=(n-1)\log\left(\left\lVert K^\leftarrow\right\rVert_{\infty\to\infty}/P_{i^\star}(j^\star)\right)$. 
    This also implies (see, \textit{e.g.},~\cite[Proposition 1.3]{ledoux2001concentration} or~\cite[Theorem 3.4.1]{concentrationMeasureII}) that, if $t>\sqrt{\frac{\log4+C_n}{2n}}$ and given any event $E$ such that $\Pm_{X^n}(E)\geq \frac12$, then 
    \begin{equation}
        \begin{split}            
    \Pm_{X^n}\left(E^c_t\right) &\leq 2 \exp \left(-2n\left(t-\sqrt{\frac{\log4+C_n}{2n}}\right)^2+C_n\right)\\
    &= \frac12\exp\left(-2nt^2+2t\sqrt{2n\left(\log4+C_n\right)}\right)\label{eq:ourGeometricSimpleChannel},
        \end{split}
    \end{equation}
    where $E_t = \{y\in\X^n : d(x,y)\leq t \text{ for some } x\in E\}$ and $d$ denotes the normalised Hamming metric. 
    
    Let us denote 
    $a:=1-\max_i \max_{x,\hat{x}} TV(\Pm_{X_i|X_{i-1}=x},\Pm_{X_i|X_{i-1}=\hat{x}}).$
    Hence, assuming $t>\frac1a\sqrt{\log(1/\Pm_{X^n}(E))/n}$, \cite[Proposition 1 \& Proposition 4]{Marton1996BoundingB} give
    \begin{align}           
    \Pm_{X^n}(E^c_t) &\leq \exp\left(-2n\left(at -\sqrt{\frac{\log(1/\Pm_{X^n}(E))}{2n}}\right)^2\right) \\
    & \leq \exp\left(-2nt^2a^2+2ta\sqrt{2n\log2}\right)\label{eq:martonGeometricDiscreteGeneral}.
    \end{align}
    Ignoring multiplicative constants and comparing the exponents of~\eqref{eq:ourGeometricSimpleChannel} and~\eqref{eq:martonGeometricDiscreteGeneral}, one can see that, whenever
     \begin{equation}
    t \geq \left(1+o_{n}(1)\right)\frac{\sqrt{2\log\left(\frac{\left\lVert K^\leftarrow\right\rVert_{\infty\to\infty}}{P_{i^\star}(j^\star)}\right)}}{(1-a^2)}
,
    \end{equation} 
    then our approach improves upon \cite{Marton1996BoundingB}. 
    
    The Dobrushin coefficient of the kernel, captured by the quantity $(1-a)$, measures the degree of dependence of the stochastic model. The smaller $1-a$ is, the less ``dependent'' the model is. If $\Pm_{X^n}$ reduces to a product distribution then $a=1$. In this case, \Cref{eq:ourGeometricSimpleChannel} 
    boils down to McDiarmid's inequality. In contrast, the larger $1-a$ is, the worse the behavior of~\Cref{eq:ourGeometricSimpleChannel}. If $a=0$, then the Markov chain is not contracting and violates the assumption of~\cite{Marton1996,Marton1996BoundingB,Marton1998,Marton2003}. Our approach, instead, can still provide meaningful results, as we will see in~\Cref{sec:example2}. A more explicit comparison for the case of a binary kernel can be found in~\Cref{app:ex1Comp3}.
    
    \subsection{A non-contracting Markov chain}\label{sec:example2}
     In order to provide concentration for Markov chains, existing work 
    requires either contractivity of the Markov chain
    ~\cite{Marton1996}, stationarity
    ~\cite{dependentViaStatAndChange, Marton1996BoundingB} or some form of mixing~\cite{samsonConcentration, Marton1996BoundingB}.
    A well-known Markov chain that evades most of these concepts is the SSRW. Suppose to have a sequence of i.i.d.\ Rademacher random variables, \textit{i.e.}, $\mathbb P(X_i =  -1)=\mathbb P(X_i= +1)=1/2$, for $i\geq 1$; the initial condition is $X_0=0$ w.p. $1$. Then, a SSRW is the Markov chain $(S_i)_{i\in\mathbb N}$ defined as $S_i = S_{i-1} + X_i$. 
    This Markov chain does not admit a stationary distribution, it is not contracting, but it is expanding (in this case, both towards the positive and the negative axes of the real line). Let us denote with $K_i$ the kernel at step $i$. Then, at each step $i>2$, one can always find two different realisations of $S_{i-1}$, let us call them $s_1,s_2$, such that  $\text{supp}(K_i(\cdot|s_1))\cap \text{supp}(K_i(\cdot|s_2))=\emptyset$, \textit{i.e.}, the support of $S_i$ is constantly growing. 
    This implies that the approaches in~\cite{Marton1996,dependentViaStatAndChange} cannot be employed, while~\cite[Theorem 1]{dependentViaMartingale} yields:  
    \begin{equation}
    \mathbb{P}(\left\lvert f-\Pm_{X^n}(f)\right\rvert \geq t) \leq 2\exp\left(-\frac{t^2}{2n}\right). \label{eq:ssrwKontorovich}
    \end{equation}
    A meaningful regime (given also the expanding nature of the Markov chain along the integers)\footnote{The standard deviation of $S_n$ is $\sqrt{n}$, hence it is expected that $S_n$ is $\pm O(\sqrt{n})$.} 
    arises when considering $t \gtrsim \sqrt{n}$.
    Let us now compute the Hellinger integral in this specific setting. This is done in the lemma below, proved in~\Cref{app:proofLemmaSSRW}.
    \begin{lemma}\label{lemma:hAlphaSSRW}
    Let $i\geq 1$, $x\in \text{supp}(S_{i-1})$, $0\leq j\leq i$, and $\alpha\geq 1$. Then,
    \begin{equation}   2^{\frac1\beta\left(-1+ i(1- h_2(\frac{i+1}{2i})+ \frac{1}{2} \log_2\left(\frac{\pi}{2}\left(\frac{i^2-1}{i}\right)\right)\right)} \leq H_\alpha^{\frac1\alpha}(\Pm_{S_i|S_{i-1}=x}\|\Pm_{S_n}) \leq 2^{i \frac{1}{\beta}-1+\frac1\alpha},\end{equation} where $h_2(x)=-x\log_2(x)-(1-x)\log_2(1-x)$ denotes the binary entropy. Thus one has that:
    \begin{equation} \frac{n-2}{4\beta} \leq \log_2 H_\alpha^\frac1\alpha(\Pm_{S^n}\|\Pm_{\bigotimes_{j=1}^n S_j}) \leq \frac{n(n-1)}{2\beta}.
    \end{equation}
    \end{lemma}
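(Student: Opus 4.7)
I would begin by noting that, conditionally on $S_{i-1}=x$, the SSRW makes $S_i$ uniform on the two-point set $\{x-1,x+1\}$; hence a direct calculation yields
\begin{equation*}
H_\alpha(\Pm_{S_i|S_{i-1}=x}\|\Pm_{S_i}) = 2^{-\alpha}\bigl[\Pm_{S_i}(x-1)^{1-\alpha}+\Pm_{S_i}(x+1)^{1-\alpha}\bigr],
\end{equation*}
reducing both per-step bounds to estimates of the binomial probability $\Pm_{S_i}(y)=\binom{i}{(i+y)/2}2^{-i}$. For the upper bound I would use the trivial lower estimate $\Pm_{S_i}(y)\geq 2^{-i}$ (attained at the extremes $y=\pm i$) to get $H_\alpha \leq 2\cdot 2^{-\alpha}\cdot 2^{i(\alpha-1)} = 2^{(i-1)(\alpha-1)}$, which via the identity $1/\alpha-1=-1/\beta$ becomes $H_\alpha^{1/\alpha} \leq 2^{(i-1)/\beta} = 2^{i/\beta-1+1/\alpha}$. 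For the lower bound I would upper bound $\Pm_{S_i}(y)$ by its modal value $p^\star_i:=\binom{i}{\lfloor i/2\rfloor}2^{-i}$ and apply Stirling's formula with $k=(i+1)/2$ (so that $(k/i)(1-k/i)=(i^2-1)/(4i^2)$) to obtain $-\log_2 p^\star_i \geq i\bigl(1-h_2((i+1)/(2i))\bigr)+\tfrac12\log_2\bigl(\pi(i^2-1)/(2i)\bigr)$; combining this with $H_\alpha^{1/\alpha} \geq 2^{-1/\beta}\cdot (p^\star_i)^{-1/\beta}$ gives exactly the displayed exponent.

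The upper bound on $H_\alpha^{1/\alpha}(\Pm_{S^n}\|\Pm_{\bigotimes_{j=1}^n S_j})$ then follows immediately from the Markovian tensorisation used in the proof of \Cref{eq:simplerTheorem}:
\begin{equation*}
H_\alpha^{1/\alpha}(\Pm_{S^n}\|\Pm_{\bigotimes_{j=1}^n S_j}) \leq \prod_{i=2}^n \max_{x} H_\alpha^{1/\alpha}(\Pm_{S_i|S_{i-1}=x}\|\Pm_{S_i}) \leq \prod_{i=2}^n 2^{(i-1)/\beta} = 2^{n(n-1)/(2\beta)}.
\end{equation*}

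The main obstacle is the matching linear-in-$n$ lower bound, since the per-step lower bound from the first display grows only logarithmically in $i$ and cannot be chained to produce a linear-in-$n$ bound on the aggregate. To handle it, I would pass to the Kullback--Leibler divergence via the monotonicity $D_\alpha \geq D_1 = D$ of R\'enyi divergences in $\alpha$ (valid for $\alpha>1$), which yields $\log_2 H_\alpha^{1/\alpha}(\Pm_{S^n}\|\Pm_{\bigotimes_{j=1}^n S_j}) = D_\alpha/\beta \geq D/\beta$. The chain rule for KL applied to the Markov factorisation of $\Pm_{S^n}$ then gives $D(\Pm_{S^n}\|\Pm_{\bigotimes_{j=1}^n S_j}) = \sum_{i=2}^n I(S_i;S_{i-1})$, and since $S_i\mid S_{i-1}$ is uniform on two points we have $I(S_i;S_{i-1}) = H(S_i) - 1$ in bits. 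Finally, writing $S_i = S_{i-1}+X_i$ with $X_i$ independent of $S_{i-1}$ and invoking the entropy inequality $H(X+Y)\geq H(X)$ for $X\perp Y$ gives $H(S_i)\geq H(S_{i-1})\geq\cdots\geq H(S_2)=3/2$, so $I(S_i;S_{i-1})\geq 1/2$ for every $i\geq 2$; hence $D\geq (n-1)/2$ and $\log_2 H_\alpha^{1/\alpha}(\Pm_{S^n}\|\Pm_{\bigotimes_{j=1}^n S_j}) \geq (n-1)/(2\beta)\geq (n-2)/(4\beta)$, which is the claimed lower bound.
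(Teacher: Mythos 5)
Your proposal is correct, and for the per-step bounds it follows essentially the paper's route: compute $H_\alpha(\Pm_{S_i|S_{i-1}=x}\|\Pm_{S_i})$ exactly as a two-term sum of binomial probabilities raised to the power $1-\alpha$, then bound $\binom{i}{k}$ from below for the upper bound (your choice $\binom{i}{k}\geq 1$ is even more direct than the paper's $\binom{i}{k}\geq (i/k)^k$, and lands on the same expression $2^{i/\beta-1+1/\alpha}$) and from above via the Stirling-type estimate $\binom{i}{k}\leq\sqrt{i/(2\pi k(i-k))}\,2^{ih_2(k/i)}$ evaluated at the mode for the lower bound. The aggregate upper bound via Markovian tensorisation is identical. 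Where you genuinely diverge is the aggregate lower bound: the paper proves a reverse-H\"older tensorisation inequality, $H_\alpha(\Pm_{S^n}\|\Pm_{\bigotimes_j S_j})\geq\prod_{i=2}^n\min_x H_\alpha(\Pm_{S_i|S_{i-1}=x}\|\Pm_{S_i})$, and then sums the per-step exponents, checking numerically that each term $-1+i(1-h_2(\frac{i+1}{2i}))+\frac12\log_2(\frac{\pi}{2}\frac{i^2-1}{i})$ exceeds $1/4$ for $i\geq 4$. Your stated reason for abandoning this route is mistaken: the per-step exponents do not merely ``grow logarithmically and fail to chain'' --- each is bounded below by a positive constant, so their sum is linear in $n$, and that is exactly how the paper concludes. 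Nevertheless, your substitute argument is valid and arguably cleaner: $\log_2 H_\alpha^{1/\alpha}=D_\alpha/\beta\geq D/\beta$ by monotonicity of R\'enyi divergences in $\alpha$, the divergence to the product of marginals equals the total correlation $\sum_{i=2}^n I(S_i;S^{i-1})=\sum_{i=2}^n I(S_i;S_{i-1})$ by Markovianity, and $I(S_i;S_{i-1})=H(S_i)-1\geq H(S_2)-1=1/2$ by $H(X+Y)\geq H(X)$ for independent summands. This buys you a lower bound of $(n-1)/(2\beta)$, slightly stronger than the stated $(n-2)/(4\beta)$, without needing either the reverse-H\"older tensorisation lemma or the numerical verification; its only cost is that it is specific to the aggregate quantity and does not recover the paper's per-step minimiser analysis (the observation that $x=0$ minimises the conditional Hellinger integral), which the paper also uses to state the first display as a two-sided per-step bound.
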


Combining~\Cref{lemma:hAlphaSSRW} and~\Cref{thm:main}, one has that
    \begin{equation}
\mathbb{P}\left(\left\lvert f-\Pm_{\bigotimes_{i=1}^n S_i}(f)\right\rvert \geq t \right)\leq 2^\frac1\beta \exp\left(\frac{-2nt^2}{\beta}+ \frac{n(n-1)}{2\beta}\ln 2\right).\label{eq:ssrwOurBound}       
    \end{equation}
    It is easy to see that, whenever $t > \frac{\sqrt{(n-1)\ln(2)}}{2}$, \Cref{eq:ssrwOurBound} gives an exponential decay. For instance, choosing $t =\sqrt{n}$,  
    one retrieves
    \begin{equation}
    \mathbb{P}\left(\left\lvert f-\Pm_{\bigotimes_{i=1}^n S_i}(f)\right\rvert \geq \sqrt{n}\right) \leq 2^\frac1\beta  \exp\left(-\frac{n^2}{\beta}\left(2 - \frac{\ln2}{2} +\frac{\ln2}{2n}\right)\right).
    \end{equation}
    In contrast, the same choice in~\Cref{eq:ssrwKontorovich} gives
    \begin{equation}
    \mathbb{P}(\left\lvert f-\Pm_{S^n}(f)\right\rvert \geq \sqrt{n}) \leq 2\exp(-1/2).
    \end{equation}
    More generally, selecting $t$ of order $\sqrt{n}$ suffices to achieve an exponential decay in~\Cref{eq:ssrwOurBound}, while to obtain a similar speed of decay in~\Cref{eq:ssrwKontorovich} $t$ needs to be at least of order $n$. The approach advanced in this work can, thus, not only be employed in settings where most of the other approaches fail (\textit{e.g.},~\cite{Marton1996,Marton1996BoundingB,dependentViaStatAndChange}), but it also brings a significant improvement over the rate of decay that one can provide. It is worth noticing that~\Cref{eq:ssrwOurBound} provides concentration around $\Pm_{\otimes_{i=1}^n S_i}(f)$ while~\Cref{eq:ssrwKontorovich} provides concentration around $\Pm_{S^n}(f)$. In order to provide a more fair comparison between the two approaches, we will now provide a bound on concentration around $\Pm_{S^n}(f)$ via~\Cref{thm:main}.
    In particular, one can prove the following (see~\Cref{app:boundOnMeans} for the proof).
    \begin{lemma}\label{lemma:differenceOfMeans}
        Let $\alpha>1$. Denote with $t_\alpha = \sqrt{\frac{\beta\ln(H_\alpha(\Pm_{X^n}\|\Pm_{\otimes X_i}))} {2\alpha n}}$. Then, the following holds true:
        \begin{equation}
            \left| \Pm_{X^n}(f)- \Pm_{\otimes_{i=1}^n X_i}(f)\right| \leq t_\alpha + \frac{\sqrt{\beta}}{2^\frac1\alpha\sqrt{\frac{2n}{\alpha} \ln(H_\alpha(\Pm_{X^n}\|\Pm_{\otimes_{i=1}^n X_i}))}}.
        \end{equation}
    \end{lemma}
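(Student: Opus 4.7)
The plan is to bound the difference between the two means by Jensen's inequality and the tail-integral representation of the expectation of a non-negative random variable. Specifically, setting $Z := f - \Pm_{\bigotimes_{i=1}^n X_i}(f)$, one has
\begin{equation}
\left|\Pm_{X^n}(f) - \Pm_{\bigotimes_{i=1}^n X_i}(f)\right| = \left|\Pm_{X^n}(Z)\right| \leq \Pm_{X^n}(|Z|) = \int_0^\infty \Pm_{X^n}(|Z|\geq t)\,\dd t.
\end{equation}
This reduces the whole statement to integrating the tail bound provided by~\Cref{thm:main}.

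Next, I would split the integral at the threshold $t_\alpha$, which is picked so that $\exp(-2nt_\alpha^2/\beta) = H_\alpha^{-1/\alpha}(\Pm_{X^n}\|\Pm_{\bigotimes_{i=1}^n X_i})$. On $[0,t_\alpha]$ I simply use the trivial bound $\Pm_{X^n}(|Z|\geq t)\leq 1$, which contributes $t_\alpha$. On $[t_\alpha,\infty)$ I plug in~\Cref{eq:generalTheorem} with $c_i=1/n$, giving the Gaussian integrand $2^{1/\beta}H_\alpha^{1/\alpha}\exp(-2nt^2/\beta)$. I then control this integrand by the classical tail estimate
\begin{equation}
\int_a^\infty e^{-ct^2}\,\dd t \leq \frac{1}{2ca}e^{-ca^2}, \qquad a,c > 0,
\end{equation}
applied with $c = 2n/\beta$ and $a=t_\alpha$. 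By the very definition of $t_\alpha$, the exponential factor $e^{-2nt_\alpha^2/\beta}$ cancels the $H_\alpha^{1/\alpha}$ prefactor exactly, leaving the residual $\tfrac{2^{1/\beta}\beta}{4nt_\alpha}$.

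Finally, I substitute $t_\alpha=\sqrt{\beta\ln H_\alpha/(2\alpha n)}$ and use $2^{1/\beta}=2\cdot 2^{-1/\alpha}$ (since $1/\beta=1-1/\alpha$); a short computation reshapes the residual into $\sqrt{\beta}/\bigl(2^{1/\alpha}\sqrt{(2n/\alpha)\ln H_\alpha}\bigr)$, matching the second term in the statement. The only real obstacle is bookkeeping: the split point $t_\alpha$ has to be chosen to make the exponential decay rate $2n/\beta$ align exactly with the Hellinger factor $H_\alpha^{1/\alpha}$; any other split would leave a residual $H_\alpha$ factor in the bound. With this choice the estimate is essentially immediate.
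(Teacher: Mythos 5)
Your proposal is correct and follows essentially the same route as the paper's proof: the tail-integral representation of $\Pm_{X^n}(|f-c|)$, the split at $t_\alpha$ with the trivial bound on $[0,t_\alpha]$, the Gaussian tail estimate $\int_{t_\alpha}^\infty e^{-2nt^2/\beta}\,\dd t \leq \frac{\beta}{4nt_\alpha}e^{-2nt_\alpha^2/\beta}$ (the paper obtains it by inserting the factor $t/t_\alpha\geq 1$ and integrating exactly), and the cancellation of $H_\alpha^{1/\alpha}$ by the choice of $t_\alpha$. The final bookkeeping with $2^{1/\beta}=2\cdot 2^{-1/\alpha}$ also matches, so nothing further is needed.
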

    Moreover, one can leverage~\Cref{lemma:hAlphaSSRW} to prove that $\frac{\sqrt{\beta}}{2^\frac1\alpha\sqrt{\frac{2n}{\alpha} \ln(H_\alpha(\Pm_{X^n}\|\Pm_{\otimes_{i=1}^n X_i}))}} \leq \frac{2^\frac{1}{\beta}\beta}{n \sqrt{\ln(2)(2-\frac4n)}} = o_n(1)$.
    Consequently, one has that
    \begin{align}
    \left|f - \Pm_{S^n}(f) \right| &=  \left|f 
 - \Pm_{S^n}(f) - \Pm_{\otimes_{i=1}^n S_i}(f)+ \Pm_{\otimes_{i=1}^n S_i}(f)\right| \\
 &\leq \left|f- \Pm_{\otimes_{i=1}^n S_i}(f) \right| + \left|\Pm_{X^n}(f)- \Pm_{\otimes_{i=1}^n S_i}(f)\right| \\
 &\leq \left|f- \Pm_{\otimes_{i=1}^n S_i}(f) \right| + t_\alpha+o_n(1). \label{eq:meanComparisons}
\end{align}
~\Cref{eq:meanComparisons} together with~\Cref{lemma:hAlphaSSRW} and~\Cref{thm:main} imply that:
\begin{align}
    \mathbb{P}\left(\left|f- \Pm_{S^n}(f) \right|\geq  t+t_\alpha+o_n(1) \right) &\leq  \mathbb{P}\left(\left|f- \Pm_{\otimes_{i=1}^n S_i}(f) \right|\geq  t \right) \\
    &\leq 2^\frac1\beta H_\alpha^\frac1\alpha(\Pm_{S^n}\|\Pm_{\otimes_{i=1}^n S_i}) \exp\left(-\frac{2n t^2}{\beta}\right).
\end{align}
Hence, selecting $\tilde{t}=2\sqrt{n}$,~\Cref{thm:main} and~\Cref{lemma:hAlphaSSRW} lead to 
\begin{equation}
     \mathbb{P}\left(\left|f- \Pm_{S^n}(f) \right|\geq \tilde{t} \right) \leq 2^\frac1\beta  \exp\left(-\frac{n^2}{\beta}\left(2 - \frac{\ln2}{2} +\frac{\ln2}{2n}\right)\right), 
\end{equation}
while, similarly to before,~\cite[Theorem 1]{dependentViaMartingale} (see~\Cref{eq:ssrwKontorovich}) leads to:
\begin{equation}
     \mathbb{P}\left(\left|f- \Pm_{S^n}(f) \right|\geq \tilde{t} \right) \leq 2\exp(-2).
\end{equation}
    \subsection{A non-Markovian Process} \label{sec:example3}
    Next, we consider a non-Markovian setting in which each step of the stochastic process depends on its entire past: 
    \begin{equation}
    X_n = \begin{cases}
    +1, &\text{ with probability }  \sum_{i=0}^{n-1}p_iX_i, \\
    -1, &\text{ with probability } 
    1 - \sum_{i=0}^{n-1}p_iX_i,
    \end{cases}
    \end{equation}
    for $n\ge 1$, $p_i>0$ and $X_0=+1$ with probability 1.
    The choice of the parameters $p_i$ is arbitrary (given the constraint that $\sum_{i=0}^{n-1} p_i < 1$) but for concreteness we will set $p_i = 2^{-i-1}$ for every $i\geq 0$. Then, $\mathbb{P}_{X_1}(1|x_0)=\frac12=\mathbb{P}_{X_1}(-1|x_0)$ and for each $n\geq 1$,
    $$\mathbb{P}_{X_n}(1|x_0^{n-1})=\frac12+ \sum_{i=1}^{n-1}p_ix_i= \sum_{i=0}^{n-1}x_i 2^{-i-1}= 1- \mathbb{P}_{X_n}(-1|x_0^{n-1}),$$ 
    with $x_0=1$.
    Consequently, following the calculations detailed in~\Cref{app:computationsNonMarkovianEx}, we have
    \begin{align}\label{eq:det1}
    H_\alpha(\Pm_{X_n}(\cdot|x_0^{n-1})\| (1/2,1/2)) 
    &< 2\sum_{j=0}^{\lfloor \frac{\alpha}{2}\rfloor} \binom{\lfloor \alpha \rfloor}{2j} \left(2\sum_{i=1}^{n-1}p_ix_i\right)^{2j},
    \end{align}
which, as $p_i = 2^{-i-1}$, gives 
    \begin{equation}\label{eq:det2} \max_{x_1^{n-1}}H_\alpha(\Pm_{X_n}(\cdot|x_0^{n-1})\| (1/2,1/2))  < 2^{\alpha}.
    \end{equation} Thus, $H_\alpha^\frac1\alpha(\Pm_{X^n}\|(1/2,1/2)^{\otimes n})< 2^{n-1}$ 
    and an application of \Cref{thm:main}, as stated in~\Cref{eq:simplerTheorem}, yields:
    \begin{equation}   
    \mathbb{P}\left(\left\{\left\lvert f-\Pm_{\bigotimes_{i=1}^n X_i}(f)\right\rvert \geq t\right\} \right) \leq \inf_{\beta>1} 2^\frac1\beta \exp\left(-\frac{2n}{\beta}\left(t^2-\frac{n-1}{n}\frac{\beta\ln2}{2}\right)\right),\label{eq:boundArbitraryDepend}
    \end{equation}
    with exponential decay whenever
    \begin{equation}
    t^2 > (1+o_n(1))\frac{\beta\ln2}{2}.
    \end{equation}
    
    Like before, choosing a larger $\beta$ slows down the exponential decay, but it reduces the multiplicative coefficient introduced via $H_\alpha$. For $n$ 
    and $t$ large enough, one can pick $\beta\to 1$ and retrieve a McDiarmid-like exponential decay.
     Given that this setting does not characterise a Markovian dependence (at each step the stochastic process depends on its entire past), one cannot employ the technique described in~\cite{dependentViaStatAndChange} or in~\cite{Marton1996}. One can, however, employ the technique described in~\cite{samsonConcentration} and~\cite{dependentViaMartingale}. Both these approaches require the computation of $\{\bar{\theta}_{ij}\}_{1\leq i < j \leq n}$ with
    \begin{equation}
      \bar{\theta}_{ij}:=   \sup_{x^{i-1}, w,\hat{w}} \left\lVert \mathcal{L}(X_j^n|X^i=(x^{i-1}, w)) - \mathcal{L}(X_j^n|X^i=(x^{i-1},\hat{w})) \right\rVert_{TV}, \label{eq:totalVarationDependentCase}
    \end{equation}
    where $\mathcal{L}(X_j^n|X^i=(x^{i-1},w))$ denotes the conditional distribution of $X_j^n$ given $X^i=(x^{i-1},w)$. The $\bar{\theta}$'s are then organised in $n\times n$ upper-triangular matrices whose norms are computed in order to provide an upper bound on the probability of interest. In particular,~\cite{samsonConcentration} requires the $\ell_2$-norm, while~\cite{dependentViaMartingale} requires the operator norm of the matrix.  
    Similarly, to employ the technique provided in~\cite{Marton2003} one would need to compute the following quantity
    \begin{equation}
    \bar{C} = \max_{1\leq i \leq n}\sup_{\substack{x^{i-1}\in\X^{i-1}\\ w,\hat{w}\in\X}}\inf_{\pi \in \Pi(\mathcal{L}(X^n|x^i,w), \mathcal{L}(X^n|x^i,\hat{w}))} \pi(d). \label{eq:martonDependentCase}
    \end{equation}
    Here, $d$ denotes the normalised Hamming metric and $\Pi(\mathcal{L}(X^n|x^i,w), \mathcal{L}(X^n|x^i,\hat{w}))$ represents the set of all the couplings defined on $\X^n\times \X^n$ such that the corresponding marginals are $\mathcal{L}(X^n|x^i,w)$ and $\mathcal{L}(X^n|x^i,\hat{w})$. However, computing any of these objects in practice can be complicated even in simple settings. In contrast, with the framework proposed here and thanks to the tensorisation properties of the Hellinger integral, one can easily bound the information measure and provide an exponentially  decaying probability (whenever the probability of the same event under independence decays exponentially and for opportune choices of the parameters).
    
    \subsection{Markov Chain Monte Carlo}\label{sec:MCMC}
    An intriguing application of the method proposed in~\cite{dependentViaStatAndChange} consists in providing error bounds for Markov Chain Monte Carlo methods. For instance, assume that one is trying to estimate the mean $\pi(f)$  for some function $f$ and some measure $\pi$ which cannot be directly sampled. A common approach consists in considering a Markov chain $\{X_i\}_{i\geq 1}$ whose stationary distribution is $\pi$ and estimating $\pi(f)$ via empirical averages of samples $\{X_i\}_{n_0+1}^{n_0+n}$, where $n_0$ characterises the so-called ``burn-in period''. This period ensures that enough time has passed and the Markov chain is sufficiently close to the stationary distribution $\pi$ before sampling from it.~\cite[Theorem 12]{dependentViaStatAndChange} gives that 
    \begin{equation}
    \mathbb{P}\left(\frac1n \sum_{i=1}^n f(X_{n_0+i})-\pi(f) > t \right) \leq C(\nu,n_0,\alpha)\exp\left(-\frac1\beta \cdot \frac{1-\max\{\lambda_r,0\}}{1+\max\{\lambda_r,0\}}\cdot \frac{2nt^2}{(b-a)^2} \right), \label{eq:MCMCFan}
    \end{equation}
    where $f:\X\to [a,b]$ is uniformly bounded, $\lambda_r$ represents the right-spectral gap (see~\cite[Definition 20]{dependentViaStatAndChange}), $\alpha\in(1,+\infty)$, $\beta$ denotes its H\"older's conjugate and $C$ is a constant depending on the burn-in period $n_0$, the Radon-Nikodym derivative between the starting measure $\nu$ and the stationary measure $\pi$, and $\alpha$.
    Using the tools provided in this work, we obtain:
      \begin{align}
    \mathbb{P}\left(\frac1n \sum_{i=1}^n f(X_{n_0+i})-\pi(f) > t \right) &\leq \exp\left(-\frac{2nt^2}{\beta(b-a)^2}\right) H_\alpha^\frac1\alpha(\nu K^{n_0}\|\pi) \prod_{i=2}^n \max_{x_{n_0+i-1}} H_\alpha^\frac1\alpha(K(\cdot|x_{n_0+i-1})\|\pi)\notag\\
    &\leq C(\nu,n_0,\alpha)\exp\left(-\frac{2nt^2}{\beta(b-a)^2}\right)\max_{x} \pi(\{x\})^{-\frac{n-1}{\beta}} , \label{eq:MCMCOurn0}
    \end{align}
   where~\Cref{eq:MCMCOurn0} follows from the fact that $H^\frac1\alpha_\alpha(\nu K^{n_0}\|\pi)$ represents the $L^\alpha(\pi)$-norm of the Radon-Nikodym derivative and can thus be bounded like in~\cite[Theorem 12]{dependentViaStatAndChange}.
    The idea behind the result is as follows. Given that one is trying to estimate the mean of $f$ under $\pi$ using empirical averages, if one had samples taken in an i.i.d.\ fashion from $\pi$, the exponential convergence would be guaranteed. However, the issue is that one does \emph{not} have access to samples of $\pi$. Thus, changing the measure to an $n$-fold tensor product of $\pi$, one can still guarantee an exponential decay at the cost of a multiplicative price depending on how far the samples are from the stationary distribution.
    By making a direct comparison,  assuming $\lambda_r>0$, one can see that if \begin{align}
        t^2 &\geq \frac{n-1}{n}\frac{(b-a)^2}{2}\frac{1+\lambda_r}{2\lambda_r} \log \left(\frac{1}{\min_x \pi(\{x\})}\right)\label{eq:thresholdEtaMCMC}\\
        &=\left(1+o_n(1)\right)\frac{(b-a)^2}{2}\frac{1+\lambda_r}{2\lambda_r} \log \left(\frac{1}{\min_x \pi(\{x\})}\right),\notag
    \end{align} then the RHS of~\Cref{eq:MCMCOurn0} decays faster than the RHS of~\Cref{eq:MCMCFan}. A comparison for the binary symmetric channel, with the computations of all the parameters,  can be found in~\Cref{app:MCMCBinary}. 

    Similarly to~\cite{dependentViaStatAndChange}, one can also show that an exponential decay is guaranteed in~\Cref{eq:MCMCOurn0} if $n_0 = \Omega(\log n)$. Furthermore, as~\Cref{eq:MCMCOurn0} improves the exponential decay for $t$ satisfying~\Cref{eq:thresholdEtaMCMC}, in the same regime the induced lower bound over $n_0$ will be improved as well. Finally, we highlight that  
    \Cref{thm:main} applies to a much larger family of functions $f$ than what can be handled by  \cite{dependentViaStatAndChange}.

   \section{Conclusions}\label{sec:conclusions}
   We introduced a novel approach to the concentration of measure for dependent random variables. The generality of our framework allows to consider arbitrary kernels without requiring either stationarity (as opposed to~\cite{dependentViaStatAndChange}) or contractivity (as opposed to~\cite{Marton1996,Marton1996BoundingB}). Moreover, our technique applies to any family of functions which is known to concentrate when the random variables are actually independent. Said technique is employed and compared to the state of the art in four different settings: finite-state Markov chains, a non-contractive one (the SSRW), a non-Markovian process, and Monte Carlo Markov Chain methods. In each of these settings, we provide a regime of parameters in which we guarantee a McDiarmid-like decay and improve over existing results. 
    The improvement is the most striking in the case of the SSRW, where the only (closed-form) alternative approach gives a \emph{constant} probability of deviation from the average, as opposed to the \emph{exponentially decaying} probability guaranteed by our framework.
    The bounds provided display a threshold phenomenon depending on the accuracy $t$, \textit{i.e.}, one can show concentration only for values of $t$ larger than a threshold depending on the Hellinger integral (and its scaling with respect to the number of variables $n$). 
    Consider for instance~\Cref{eq:boundArbitraryDepend}: if $t^2> \frac{\beta}{2n} \ln H^\frac{1}{\alpha}_\alpha \approx (1+o_n(1))(\beta\ln(2)/2)$, then the exponent is negative and one has exponential concentration, otherwise the exponent becomes positive and the bound trivialises to something larger than $1$. We believe this to be an artifact of the analysis and not an intrinsic property of the concentration of measure phenomenon.
    \section*{Acknowledgments}
The authors are partially supported by the 2019 Lopez-Loreta Prize. They would also like to thank Professor Jan Maas for providing valuable suggestions and comments on an early version of the work.
\bibliographystyle{IEEEtran}
\bibliography{sample}

\begin{thebibliography}{10}
\providecommand{\url}[1]{#1}
\csname url@samestyle\endcsname
\providecommand{\newblock}{\relax}
\providecommand{\bibinfo}[2]{#2}
\providecommand{\BIBentrySTDinterwordspacing}{\spaceskip=0pt\relax}
\providecommand{\BIBentryALTinterwordstretchfactor}{4}
\providecommand{\BIBentryALTinterwordspacing}{\spaceskip=\fontdimen2\font plus
\BIBentryALTinterwordstretchfactor\fontdimen3\font minus
  \fontdimen4\font\relax}
\providecommand{\BIBforeignlanguage}[2]{{%
\expandafter\ifx\csname l@#1\endcsname\relax
\typeout{** WARNING: IEEEtran.bst: No hyphenation pattern has been}%
\typeout{** loaded for the language `#1'. Using the pattern for}%
\typeout{** the default language instead.}%
\else
\language=\csname l@#1\endcsname
\fi
#2}}
\providecommand{\BIBdecl}{\relax}
\BIBdecl

\bibitem{concentrationMeasureII}
M.~Raginsky and I.~Sason, ``Concentration of measure inequalities in
  information theory, communications, and coding,'' \emph{Foundations and
  Trends in Communications and Information Theory}, vol.~10, no. 1-2, pp.
  1--246, 2013.

\bibitem{mcdiarmid_1989}
C.~McDiarmid, \emph{On the method of bounded differences}, ser. London
  Mathematical Society Lecture Note Series.\hskip 1em plus 0.5em minus
  0.4em\relax Cambridge University Press, 1989, p. 148–188.

\bibitem{dependentViaStatAndChange}
J.~Fan, B.~Jiang, and Q.~Sun, ``Hoeffding's inequality for general markov
  chains and its applications to statistical learning,'' \emph{Journal of
  Machine Learning Research}, vol.~22, no. 139, pp. 1--35, 2021.

\bibitem{Marton1996}
K.~Marton, ``A measure concentration inequality for contracting {M}arkov
  chains,'' \emph{Geometric and functional analysis}, vol.~6, no.~3, pp.
  556--571, 1996.

\bibitem{Marton1996BoundingB}
------, ``Bounding {\(\bar{d}\)}-distance by informational divergence: a method
  to prove measure concentration,'' \emph{The Annals of Probability}, vol.~24,
  pp. 857--866, 1996.

\bibitem{Marton1998}
------, ``Measure concentration for a class of random processes,''
  \emph{Probability Theory and Related Fields}, vol. 110, p. 427–439, 1998.

\bibitem{Marton2003}
------, ``Measure concentration and strong mixing.'' \emph{Sci. Math. Hungarica
  Studia}, vol.~40, pp. 95--113, 2003.

\bibitem{samsonConcentration}
P.-M. Samson, ``{Concentration of measure inequalities for Markov chains and
  $\Phi$-mixing processes},'' \emph{The Annals of Probability}, vol.~28, no.~1,
  pp. 416 -- 461, 2000.

\bibitem{dependentViaMartingale}
L.~A. Kontorovich and K.~Ramanan, ``{Concentration inequalities for dependent
  random variables via the martingale method},'' \emph{The Annals of
  Probability}, vol.~36, no.~6, pp. 2126 -- 2158, 2008.

\bibitem{dependentViaCoupling}
J.-R. Chazottes, P.~Collet, C.~K{\"u}lske, and F.~Redig, ``Concentration
  inequalities for random fields via coupling,'' \emph{Probability Theory and
  Related Fields}, vol. 137, pp. 201--225, 2005.

\bibitem{dependentViaMartonCouplingAndSpectral}
D.~Paulin, ``{Concentration inequalities for Markov chains by Marton couplings
  and spectral methods},'' \emph{Electronic Journal of Probability}, vol.~20,
  pp. 1 -- 32, 2015.

\bibitem{BLM2013Concentration}
S.~Boucheron, G.~Lugosi, and P.~Massart, \emph{Concentration Inequalities: A
  Nonasymptotic Theory of Independence}.\hskip 1em plus 0.5em minus 0.4em\relax
  Oxford University Press, 2013.

\bibitem{markovChain1}
D.~W. Gillman, ``{Hidden Markov chains: convergence rates and the complexity of
  inference},'' Ph.D. dissertation, Massachusetts Institute of Technology,
  Boston, US, 1993.

\bibitem{markovChain2}
I.~H. Dinwoodie, ``A probability inequality for the occupation measure of a
  reversible {M}arkov chain,'' \emph{The Annals of Applied Probability},
  vol.~5, no.~1, pp. 37--43, 1995.

\bibitem{markovChain3}
C.~A. Le{\'o}n and F.~Perron, ``{Optimal Hoeffding bounds for discrete
  reversible Markov chains},'' \emph{The Annals of Applied Probability},
  vol.~14, no.~2, pp. 958--970, 2004.

\bibitem{markovChain4}
K.-M. Chung, H.~Lam, Z.~Liu, and M.~Mitzenmacher, ``Chernoff-hoeffding bounds
  for {M}arkov chains: Generalized and simplified,'' in \emph{Symposium on
  Theoretical Aspects of Computer Science}, 2012.

\bibitem{markovChain5}
B.~Miasojedow, ``Hoeffding’s inequalities for geometrically ergodic {M}arkov
  chains on general state space,'' \emph{Statistics \& Probability Letters},
  vol.~87, pp. 115--120, 2014.

\bibitem{subgaussianErgodicMarkovChains}
\BIBentryALTinterwordspacing
J.~Dedecker and S.~Gou{\"e}zel, ``{Subgaussian concentration inequalities for
  geometrically ergodic Markov chains},'' \emph{Electronic Communications in
  Probability}, vol.~20, no. none, pp. 1 -- 12, 2015. [Online]. Available:
  \url{https://doi.org/10.1214/ECP.v20-3966}
\BIBentrySTDinterwordspacing

\bibitem{mcDiarmidGeometricMarkovChains}
\BIBentryALTinterwordspacing
A.~Havet, M.~Lerasle, E.~Moulines, and E.~Vernet, ``{A quantitative
  McDiarmid’s inequality for geometrically ergodic Markov chains},''
  \emph{Electronic Communications in Probability}, vol.~25, no. none, pp. 1 --
  11, 2020. [Online]. Available: \url{https://doi.org/10.1214/20-ECP286}
\BIBentrySTDinterwordspacing

\bibitem{dependentAzimin}
C.~H. Lampert, L.~Ralaivola, and A.~Zimin, ``Dependency-dependent bounds for
  sums of dependent random variables,'' \emph{arXiv preprint arXiv:1811.01404},
  2018.

\bibitem{gaussianBoundsStochasticChains}
J.~R. Chazottes, S.~Gallo, and D.~Takahashi, ``Optimal gaussian concentration
  bounds for stochastic chains of unbounded memory,'' 2020.

\bibitem{KLMC}
I.~Kontoyiannis, L.~Lastras-Montano, and S.~Meyn, ``Relative entropy and
  exponential deviation bounds for general markov chains,'' in
  \emph{Proceedings. International Symposium on Information Theory, 2005. ISIT
  2005.}, 2005, pp. 1563--1567.

\bibitem{MCMCKL}
\BIBentryALTinterwordspacing
I.~Kontoyiannis, L.~A. Lastras-Monta\~{n}o, and S.~P. Meyn, ``Exponential
  bounds and stopping rules for mcmc and general markov chains,'' in
  \emph{Proceedings of the 1st International Conference on Performance
  Evaluation Methodolgies and Tools}, ser. valuetools '06.\hskip 1em plus 0.5em
  minus 0.4em\relax New York, NY, USA: Association for Computing Machinery,
  2006, p. 45–es. [Online]. Available:
  \url{https://doi.org/10.1145/1190095.1190152}
\BIBentrySTDinterwordspacing

\bibitem{fullVersionGeneralization}
A.~R. Esposito, M.~Gastpar, and I.~Issa, ``Generalization error bounds via
  {R}ényi-, f-divergences and maximal leakage,'' \emph{IEEE Transactions on
  Information Theory}, vol.~67, no.~8, pp. 4986--5004, 2021.

\bibitem{fDiv1}
F.~Liese and I.~Vajda, ``On divergences and informations in statistics and
  information theory,'' \emph{IEEE Transactions on Information Theory},
  vol.~52, no.~10, pp. 4394--4412, 2006.

\bibitem{renyiEntropy}
A.~Rényi, ``On measures of entropy and information,'' \emph{Proceedings of the
  4th Berkeley Symposium on Mathematics, Statistics and Probability}, vol.~1,
  pp. 547--561, 1960.

\bibitem{RenyiKLDiv}
T.~van Erven and P.~Harremo\"es, ``{R}\'enyi divergence and {Kullback-Keibler}
  divergence,'' \emph{IEEE Transactions on Information Theory}, vol.~60, no.~7,
  pp. 3797--3820, 2014.

\bibitem{spreadingOfSetsHypercontractivity}
R.~Ahlswede and P.~Gacs, ``Spreading of sets in product spaces and
  hypercontraction of the {M}arkov operator,'' \emph{The Annals of
  Probability}, vol.~4, no.~6, pp. 925 -- 939, 1976.

\bibitem{hypercontractivityBonamiBeckner}
E.~Mossel, R.~O'Donnell, O.~Regev, J.~E. Steif, and B.~Sudakov,
  ``Non-interactive correlation distillation, inhomogeneous {M}arkov chains,
  and the reverse {Bonami-Beckner} inequality,'' \emph{Israel Journal of
  Mathematics}, vol. 154, no.~1, pp. 299--336, 2006.

\bibitem{hypercontractivity}
S.~Kamath and V.~Anantharam, ``Non-interactive simulation of joint
  distributions: The {Hirschfeld-Gebelein-Rényi } maximal correlation and the
  hypercontractivity ribbon,'' in \emph{50th Annual Allerton Conference on
  Communication, Control, and Computing}, 2012, pp. 1057--1064.

\bibitem{sdpiRaginsky}
M.~Raginsky, ``Strong data processing inequalities and \(\phi\)-sobolev
  inequalities for discrete channels,'' \emph{IEEE Transactions on Information
  Theory}, vol.~62, no.~6, pp. 3355--3389, 2016.

\bibitem{logSobolevContraction}
------, ``Logarithmic sobolev inequalities and strong data processing theorems
  for discrete channels,'' in \emph{IEEE International Symposium on Information
  Theory}, 2013, pp. 419--423.

\bibitem{ledoux2001concentration}
M.~Ledoux, \emph{The Concentration of Measure Phenomenon}, ser. Mathematical
  surveys and monographs.\hskip 1em plus 0.5em minus 0.4em\relax American
  Mathematical Society, 2001.

\bibitem{ambrosioGradientFlowsMetric2008}
L.~Ambrosio, N.~Gigli, and G.~Savaré, \emph{Gradient Flows in Metric Spaces
  and in the Space of Probability Measures}, 2nd~ed.\hskip 1em plus 0.5em minus
  0.4em\relax Springer Science \& Business Media, 2008.

\bibitem{gotzeWeaklyDependent}
F.~G{\"o}tze, H.~Sambale, and A.~Sinulis, ``{Higher order concentration for
  functions of weakly dependent random variables},'' \emph{Electronic Journal
  of Probability}, vol.~24, pp. 1 -- 19, 2019.

\bibitem{wynerCommon}
A.~Wyner, ``The common information of two dependent random variables,''
  \emph{IEEE Transactions on Information Theory}, vol.~21, no.~2, pp. 163--179,
  1975.

\bibitem{informationTheoryGallager}
R.~G. Gallager, \emph{Information Theory and Reliable Communication}.\hskip 1em
  plus 0.5em minus 0.4em\relax USA: John Wiley \& Sons, Inc., 1968.

\end{thebibliography}


\appendices

\crefalias{section}{appendix}
\crefalias{subsection}{appendix}

\section{Proof of~\Cref{thm:main}}\label{app:proofMainThm}

\begin{proof}
Assume that $E=\{|f-\Pm_{\bigotimes_{i=1}^n X_i}(f)|\geq t\}$. Then, one has that
\begin{align}
        \Pm_{X^n}(E) &= \int \mathbbm{1}_E \dd\Pm_{X^n} \\ 
        &=  \int \mathbbm{1}_E \frac{d\Pm_{X^n}}{d\Pm_{\bigotimes_{i=1}^n X_i}} \dd\Pm_{\bigotimes_{i=1}^n X_i} \\ &\leq \left(\int \mathbbm{1}_E \dd\Pm_{\bigotimes_{i=1}^n X_i}\right)^{\frac{\alpha-1}{\alpha}} \left(\int \left(\frac{d\Pm_{X^n}}{d\Pm_{\bigotimes_{i=1}^n X_i}}\right)^\alpha \dd\Pm_{\bigotimes_{i=1}^n X_i}\right)^\frac1\alpha  \\
        &= \Pm_{\bigotimes_{i=1}^n X_i}^{\frac{1}{\beta}}(E) H_\alpha^{\frac1\alpha}(\Pm_{X^n}\|\Pm_{\bigotimes_{i=1}^n X_i}),
\end{align}
where the inequality in the third line follows from H\"older's inequality. Moreover, 
\begin{align} \label{eq:startTensor}
H_\alpha(\Pm_{X^n}\|\Pm_{\bigotimes_{i=1}^n X_i})  &=  \Pm_{X_1}\left(\Pm_{\bigotimes_{i=2}^n X_i} \left(\frac{d\Pm_{X_2^{n}|X_1}}{d\Pm_{\bigotimes_{i=2}^n X_i}}\right)^\alpha\right) \\
&=\Pm_{X_1}\Bigg(\Pm_{X_2}\left(\left(\frac{d\Pm_{X_2|X_1}}{d\Pm_{X_2}}\right)^\alpha\left(\Pm_{\bigotimes_{i=3}^n X_i} \left(\frac{d\Pm_{X_3^{n}|X_2}}{d\Pm_{\bigotimes_{i=3}^n X_i}}\right)^\alpha\Bigg)\right)\right) \\
&\leq \Pm_{X_1}\left(\Pm_{X_2}\left(\left(\frac{d\Pm_{X_2|X_1}}{d\Pm_{X_2}}\right)^{\alpha\alpha_2}\right)^\frac{1}{\alpha_2}\Pm_{X_2}\left(\left(\Pm_{\bigotimes_{i=3}^n X_i} \left(\frac{d\Pm_{X_3^{n}|X_2}}{d\Pm_{\bigotimes_{i=3}^n X_i}}\right)^\alpha\right)^{\beta_2}\right)^\frac{1}{\beta_2}\right)  \\
&= H_\alpha^2 \cdot \Pm_{X_2}\left(\left(\Pm_{\bigotimes_{i=3}^n X_i} \left(\frac{d\Pm_{X_3^{n}|X_2}}{d\Pm_{\bigotimes_{i=3}^n X_i}}\right)^\alpha\right)^{\beta_2}\right)^\frac{1}{\beta_2},\label{eq:endTensor}
    \end{align}
where the inequality follows from H\"older's inequality, the fact that the expectation is taken with respect to the product measure $\Pm_{\bigotimes_i X_i}$ as well as the Markovianity of $\Pm_{X^n}$. $H_\alpha^2 = \Pm_{X_{1}}^\frac{1}{\beta_{1}}\left(H_{\alpha\alpha_2}^\frac{\beta_{1}}{\alpha_2}(\Pm_{X_2|X_{1}}\|\Pm_{X_2})\right)$ with $\beta_1=1$ will now be the only term depending on $X_1$. Repeating the same sequence of steps (an application of the Disintegration Theorem~\cite[Theorem 5.3.1]{ambrosioGradientFlowsMetric2008} to a Markovian setting, like in~\cite[Proposition 2.1]{gotzeWeaklyDependent}, followed by H\"older's inequality) $(n-2)$ more times leads to the product of $H_\alpha^i$ as defined in the statement of the theorem.
The result then follows by noticing that $$\Pm^\frac{1}{\beta}_{\bigotimes_i X_i}(E)\leq 2^\frac1\beta \exp\left(-\frac{2nt^2}{\beta} \right),$$ by McDiarmid's inequality.  
\end{proof}

\section{Concentration around mean and median}\label{app:cmm}
  The following result is a useful tool that allows us to compare concentration around a constant, concentration around the mean and concentration around the median. It is used in order to compare our results with the ones proposed in~\cite{Marton1996BoundingB,Marton1996}. A proof can be found in~\cite[Proposition 1.8] {ledoux2001concentration} and the statement is provided here for ease of reference.
    \begin{proposition}\label{prop:ledouxFromTailsToMedian}Let $f$ be a measurable function on a probability space $(\X,\Omega,\mu)$. Assume that, for some $a_f\in\mathbb{R}$ and a non-negative function $h:\mathbb{R}_+\to\mathbb{R}_+$ such that $
    \lim_{r\to\infty} h(r)=0$,
    \begin{equation}\mu(\left\{|f-a_f|\geq r \right\})\leq h(r)\end{equation} for all $r>0$, then
    \begin{equation}
        \mu(\left\{|f-m_f|\geq r +r_0\right\})\leq h(r),
    \end{equation}
    where $m_f$ is the median of $f$ and $r_0$ is such that $h(r_0)<\frac12$. Moreover, if $\bar{h} = \int_0^\infty h(x)dx < \infty$, then $f$ is $\mu$-integrable, $|a_f-\mu(f)|\leq\bar{h}$ and, for every $r>0$, 
    \begin{equation}
    \mu(\left\{|f-\mu(f)|\geq r +\bar{h}\right\})\leq h(r).
    \end{equation}
    In particular, if $h(r)\leq C\exp(-cr^p)$ with $0<p<\infty$, then
    \begin{equation}
    \mu(\left\{|f-M|\geq r \right\})\leq C'\exp(-\kappa_pcr^p),
    \end{equation}
    where $C'$ depends only on $C$ and $p$, $\kappa_p$ depends only on $p$, and $M$ is either the mean or the median.
    \end{proposition}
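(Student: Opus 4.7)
The plan is to handle the three claims in sequence, each of which reduces, after a small computation, to a triangle-inequality inclusion of events. The central observation is that the hypothesis $\mu(\{|f-a_f|\ge r\})\le h(r)$ forces the median $m_f$ and, under integrability, the mean $\mu(f)$ to lie close to the reference point $a_f$; once that proximity is quantified, transferring the tail bound is immediate.

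For the median bound, I would first show $|m_f-a_f|\le r_0$. Suppose for contradiction that $m_f>a_f+r_0$; then $\{f\ge m_f\}\subseteq\{f-a_f\ge r_0\}$, so by hypothesis $\mu(\{f\ge m_f\})\le h(r_0)<1/2$, contradicting the defining property of the median. A symmetric argument rules out $m_f<a_f-r_0$. Having established $|m_f-a_f|\le r_0$, the triangle inequality yields the inclusion $\{|f-m_f|\ge r+r_0\}\subseteq\{|f-a_f|\ge r\}$, and the hypothesis concludes.

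For the mean statement, I would first check integrability of $f$ via the layer-cake identity applied to $|f-a_f|$: since $\int|f-a_f|\,d\mu=\int_0^\infty\mu(\{|f-a_f|\ge r\})\,dr\le\bar h<\infty$, we have $f\in L^1(\mu)$. Splitting $f-a_f$ into positive and negative parts and applying layer-cake again gives $|\mu(f)-a_f|=|\mu(f-a_f)|\le\int_0^\infty h(r)\,dr=\bar h$. Triangle inequality then produces $\{|f-\mu(f)|\ge r+\bar h\}\subseteq\{|f-a_f|\ge r\}$, closing the case.

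For the Weibull-type regime $h(r)=C\exp(-cr^p)$, both shifts are finite: $r_0$ is the unique solution of $C\exp(-cr_0^p)=1/2$, and $\bar h=C c^{-1/p}\Gamma(1+1/p)$. Writing $s$ for whichever of $r_0,\bar h$ is relevant and $M$ for the corresponding centering, for $r\ge 2s$ one has $(r-s)^p\ge(r/2)^p=2^{-p}r^p$, so $\mu(\{|f-M|\ge r\})\le C\exp(-c\,2^{-p}r^p)$, giving $\kappa_p=2^{-p}$. For $r<2s$ one uses the trivial bound $\mu(\{\cdot\})\le 1\le C'\exp(-\kappa_p c r^p)$ after taking $C'$ large enough (depending only on $C,p$). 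The main bit of care, which is routine but genuine, is choosing $C'$ uniformly over these two regimes; this is the only real bookkeeping obstacle in the proof.
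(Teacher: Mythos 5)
Your proof is correct. The paper does not prove this proposition itself --- it is quoted verbatim from \cite[Proposition 1.8]{ledoux2001concentration} "for ease of reference" --- and your argument (bounding $|m_f-a_f|\le r_0$ via the defining property of the median, bounding $|\mu(f)-a_f|\le\bar h$ via the layer-cake formula, then transferring the tail by the triangle inequality, with the observation that $c\,r_0^p$ and $c\,\bar h^{\,p}$ depend only on $C$ and $p$ so that $C'$ can be chosen uniformly) is exactly the standard proof given there.
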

    \section{Proof of~\Cref{lemma:differenceOfMeans}}\label{app:boundOnMeans}
    \begin{proof}
        From~\Cref{thm:main} (see~\Cref{eq:thresholdEta}) we know that if  $t>t_\alpha$ one has an exponential decay in the probability, while if $t\leq t_\alpha$ then the exponent is positive and the bound is typically larger than $1$ and, thus, trivial. From this, we can prove that
\begin{align*}
    |\Pm_{X^n}(f) - \Pm_{\otimes_{i=1}^n X_i}(f)| &= \left|{\Pm_{X^n}}(f-c) \right| \\
    &\leq \Pm_{X^n}\left(|f-c|\right) \\
    &= \int_{0}^\infty \Pm_{X^n}(|f - c|\geq t) \dd t \\
    &\leq  \int_{0}^{t_\alpha} 1 \dd t + \int_{t_\alpha}^\infty \Pm_{X^n}(|f - c|\geq t) \dd t \\
    &\leq t_\alpha + 2^\frac1\beta H_\alpha^\frac1\alpha(\Pm_{X^n}\|\Pm_{\otimes_{i=1}^n X_i}) \int_{t_\alpha}^\infty \exp\left(-\frac{2n t^2}{\beta}\right) \dd t\\
    &\leq t_\alpha + 2^\frac1\beta H_\alpha^\frac1\alpha(\Pm_{X^n}\|\Pm_{\otimes_{i=1}^n X_i}) \frac{\beta}{2n t_\alpha} \int_{t_\alpha}^\infty \frac{2n t}{\beta}\exp\left(-\frac{2n t^2}{\beta}\right) \dd t \\
    &= t_\alpha + 2^\frac1\beta H_\alpha^\frac1\alpha(\Pm_{X^n}\|\Pm_{\otimes_{i=1}^n X_i}) \frac{\beta}{2n t_\alpha} \left(-\frac12 \exp\left(-\frac{2n t^2}{\beta}\right)\right)\Bigg\rvert_{t_\alpha}^{\infty}\\ 
    &= t_\alpha + 2^\frac1\beta H_\alpha^\frac1\alpha(\Pm_{X^n}\|\Pm_{\otimes_{i=1}^n X_i}) \frac{\beta}{4n t_\alpha} \exp\left(-\frac{2n t_\alpha^2}{\beta}\right) \\
    &= t_\alpha + 2^\frac1\beta\frac{\beta}{4n t_\alpha} \\
    &= t_\alpha + 2^\frac1\beta \frac{\beta}{4n} \frac{\sqrt{2n}}{\sqrt{\frac\beta\alpha \ln(H_\alpha(\Pm_{X^n}\|\Pm_{\otimes_{i=1}^n X_i}))}} \\
    &= t_\alpha +  \frac{\sqrt{\beta}}{2^\frac1\alpha\sqrt{\frac{2n}{\alpha} \ln(H_\alpha(\Pm_{X^n}\|\Pm_{\otimes_{i=1}^n X_i}))}},
\end{align*}
which gives the desired result.
    \end{proof}
\section{Tensorisation}\label{app:tensorisation}
      Many information measures satisfy tensorisation properties, meaning that, if $\nu$ and $\mu$ are measures acting on an $n$-dimensional space (typically an $n$-fold Cartesian product of one-dimensional spaces), it is possible to compute the divergence between $\nu$ and $\mu$ using ``one-dimensional projections''. This is particularly useful when the second measure is a product-measure. Indeed, if $\Pm$ and $\Q$ are two probability measures on the space $\X^n$ and $\Pm$ is a product-measure, denoting with $\bar{X}^i$ the $(n-1)$-tuple $(X_1,\ldots,X_{i-1},X_{i+1},\ldots,X_n)$, then~\cite[Proposition 3.1.2]{concentrationMeasureII} gives that
    \begin{equation}
        D(\Q\|\Pm) \leq \sum_{i=1}^n \int \dd\Q_{\bar{X}^i} D(Q_{X_i|\bar{X}^i}\|P_{X_i}),
    \end{equation}
    where $D(\Q\|\Pm)$ denotes the KL-divergence between $\Q$ and $\Pm$. 
    Hence, having access to a bound on $D(Q_{X_i|\bar{X}^i}\|P_{X_i})$ for every $1\leq i\leq n$ implies a bound on the KL-divergence between the two $n$-dimensional measures $\Q$ and $\Pm$. This property is pivotal in proving concentration results in a variety of ways~\cite[Section 3.1.3]{concentrationMeasureII}. Since of independent interest, we will now state the tensorisation properties of $H_\alpha$ as explicit results, as well as the corresponding R\'enyi's $D_\alpha$ analogues. 
    In particular, whenever the second measure is a product measure while the first one is Markovian, it is possible to prove the following.
    \begin{theorem}\label{thm:tensorisationMarkov}
    Let $\Q$ and $\Pm$ be two probability measures on the space $\X^n$ such that $\Q\ll\Pm$, and assume that $\Pm$ is a product measure (\textit{i.e.}, $\Pm=\bigotimes_{i=1}^n P_{i}$). Assume also that $\Q$ is a Markov measure induced by $Q_1$ and the kernels $Q_i(\cdot|\cdot)$ with $1\leq i\leq n$, \textit{i.e.}, $\Q(x^n) = Q_1(x_1)\prod_{i=2}^n Q_i(x_i|x_{i-1})$. Moreover, given a constant $c$, let $X_0=c$ (almost surely) be an auxiliary random variable, then, 
    \begin{equation}
        H_\alpha(Q\|\Pm) \leq
        \prod_{i=1}^n\Pm_{X_{i-1}}^\frac{1}{\beta_{i-1}}\left(H_{\alpha\alpha_i}^\frac{\beta_{i-1}}{\alpha_i}(Q_i(\cdot|X_{i-1})\|P_{X_i})\right), \label{eq:tensorisationGeneral}
    \end{equation}
    where $\alpha_i\geq 1$ for $i\geq 0$, $\beta_0=1$, $\alpha_n=1$, and $\beta_i = \alpha_i/(\alpha_i-1)$. 
    Moreover, selecting $\alpha_i\to 1^+$ which implies $\beta_{i}\to\infty$ for every $i\geq 1$, one recovers 
    \begin{equation}
        H_\alpha(\Q\|\Pm) \leq H_\alpha(Q_1\|P_1)\cdot\prod_{i=2}^n \max_{x_{i-1}\in \X} H_\alpha(Q_i(\cdot|x_{i-1})\| P_i). \label{eq:tensorisationMarkov}
    \end{equation}
    \end{theorem}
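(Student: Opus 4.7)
The plan is to follow the iterative disintegration strategy used in the proof of~\Cref{thm:main} (see~\Cref{eq:startTensor}--\Cref{eq:endTensor}), recast in this abstract setting where $\Pm$ is a product measure and $\Q$ is Markov. First, I would exploit the joint factorisations of the two measures to write
\[
\frac{d\Q}{d\Pm}(x^n) \;=\; \frac{dQ_1}{dP_1}(x_1)\,\prod_{i=2}^{n} \frac{dQ_i(\cdot|x_{i-1})}{dP_i}(x_i),
\]
raise to the power $\alpha$, and apply Fubini against $\Pm=\bigotimes_{i=1}^n P_{i}$ to obtain a nested sequence of one-dimensional integrals
\[
H_\alpha(\Q\|\Pm) \;=\; \Pm_{X_1}\!\left[\left(\tfrac{dQ_1}{dP_1}\right)^{\!\alpha} G_2(X_1)\right], \quad G_i(x_{i-1}) \;=\; \Pm_{X_i}\!\!\left[\left(\tfrac{dQ_i(\cdot|x_{i-1})}{dP_i}\right)^{\!\alpha} G_{i+1}(X_i)\right],
\]
with $G_{n+1}\equiv 1$. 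Crucially, the Markovianity of $\Q$ ensures that $G_{i+1}$ depends only on $X_i$, so the recursion is well-posed and the outer integrations against the product marginals $P_i$ are legitimate; this is the step that would fail without the Markov assumption.

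Next, I would apply H\"older's inequality at each level of the recursion with conjugate exponents $(\alpha_i,\beta_i)$, peeling off one Hellinger factor at a time:
\[
G_i(x_{i-1}) \;\le\; H_{\alpha\alpha_i}^{1/\alpha_i}\!\bigl(Q_i(\cdot|x_{i-1})\,\|\,P_i\bigr)\cdot \Pm_{X_i}\!\bigl[G_{i+1}^{\beta_i}\bigr]^{1/\beta_i}.
\]
Since the second factor does not depend on $x_{i-1}$, raising this inequality to the power $\beta_{i-1}$ and integrating against $\Pm_{X_{i-1}}$ preserves the separation and propagates the recursion cleanly. Collecting the resulting factors over $i=1,\ldots,n$ and using the conventions $X_0=c$ a.s., $\beta_0=1$ (so the $i=1$ factor reduces to $H_{\alpha\alpha_1}^{1/\alpha_1}(Q_1\|P_1)$) together with $\alpha_n=1$ at the terminal step (so no H\"older is applied to $G_n$ and the last factor is $\Pm_{X_{n-1}}^{1/\beta_{n-1}}(H_\alpha^{\beta_{n-1}}(Q_n(\cdot|X_{n-1})\|P_n))$) produces exactly~\Cref{eq:tensorisationGeneral}.

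Finally, to derive~\Cref{eq:tensorisationMarkov} from~\Cref{eq:tensorisationGeneral}, I would pass to the limit $\alpha_i\to 1^+$ for every $i\ge 1$, so that $\alpha\alpha_i\to\alpha$, $\beta_i\to\infty$, and the $L^{\beta_{i-1}}(\Pm_{X_{i-1}})$-norm converges to the essential supremum (which coincides with $\max_{x_{i-1}}$ in the discrete settings of our applications). The main bookkeeping obstacle is tracking the exponents $\alpha_i,\beta_i$ across nested H\"older steps and checking that the power $\beta_{i-1}/\alpha_i$ appearing in the statement is precisely what arises from raising the $i$-th H\"older bound to $\beta_{i-1}$ before integrating at level $i-1$; the integrability and measurability requirements are standard and handled exactly as in the proof of~\Cref{thm:main} in~\Cref{app:proofMainThm}.
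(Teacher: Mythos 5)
Your proposal is correct and follows essentially the same route as the paper, which proves this theorem by repeating the disintegration-plus-iterated-H\"older argument of~\Cref{eq:startTensor}--\Cref{eq:endTensor} with $\Q$ in place of $\Pm_{X^n}$; your explicit recursion $G_i$ and the bookkeeping of the exponents $\beta_{i-1}/\alpha_i$, the boundary conventions $\beta_0=1$, $\alpha_n=1$, and the limit $\alpha_i\to 1^+$ all match the paper's argument. No gaps.
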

    \begin{proof}
        The proof follows from the steps undertaken in~\Crefrange{eq:startTensor}{eq:endTensor} along with the discussion immediately after, but replacing $\Pm_{X^n}$ with $\mathcal{Q}$. 
    \end{proof}
    ~\Cref{thm:tensorisationMarkov}, which involves products of Hellinger integrals, can be re-written as a sum by considering R\'enyi's divergences, due to the relationship connecting the two quantities (see~\Cref{eq:hellignerRenyi}). 
    \begin{corollary}
    Under the same assumptions as in~\Cref{thm:tensorisationMarkov}, one has that
    \begin{equation}
        D_\alpha(\Q\|\Pm) \leq \frac{1}{\alpha-1}\sum_{i=1}^n \frac{1}{\beta_{i-1}} \log \Pm_{X_{i-1}}\left(\exp\left(\frac{(\alpha\alpha_i-1)\beta_{i-1}}{\alpha_i}\left(D_{\alpha\alpha_i}(Q_{i}(\cdot|X_{i-1})\|P_{X_i}\right)\right)\right),
    \end{equation}
     where $\alpha_i\geq 1$ for $i\geq 0$, $\beta_0=1$, $\alpha_n=1$ and $\beta_i = \alpha_i/(\alpha_i-1)$. 
    Moreover, selecting $\alpha_i\to 1^+$ which implies $\beta_{i}\to\infty$ for every $i\geq 1$, one recovers 
        \begin{equation}
        D_\alpha(\Q\|\Pm) \leq D_\alpha(Q_1\|P_1) + \sum_{i=2}^n \max_{x_{i-1}} D_\alpha(Q_i(\cdot|x_{i-1})\| P_i). \label{eq:tensorisationRenyi}
    \end{equation}
    \end{corollary}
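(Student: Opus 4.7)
The plan is to obtain the corollary as a direct logarithmic consequence of \Cref{thm:tensorisationMarkov}, exploiting the identity \eqref{eq:hellignerRenyi} that connects Hellinger integrals and R\'enyi divergences, namely $H_\gamma(\nu\|\mu)=\exp((\gamma-1)D_\gamma(\nu\|\mu))$ for any $\gamma>1$. Since both sides of \eqref{eq:tensorisationGeneral} are positive, I would take $\log$ of both sides and divide by $(\alpha-1)$, turning the product over $i$ into a sum and the outer $\Pm_{X_{i-1}}$-expectation into a cumulant-style expression.

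More concretely, I would substitute $H_{\alpha\alpha_i}(Q_i(\cdot|X_{i-1})\|P_{X_i}) = \exp\bigl((\alpha\alpha_i-1)\,D_{\alpha\alpha_i}(Q_i(\cdot|X_{i-1})\|P_{X_i})\bigr)$ inside the expectation appearing in the $i$-th factor of \eqref{eq:tensorisationGeneral}. Raising to the power $\beta_{i-1}/\alpha_i$ pulls the exponent $(\alpha\alpha_i-1)\beta_{i-1}/\alpha_i$ inside the exponential, so that each factor becomes
\begin{equation}
\Pm_{X_{i-1}}^{1/\beta_{i-1}}\!\left(\exp\!\left(\frac{(\alpha\alpha_i-1)\beta_{i-1}}{\alpha_i}\,D_{\alpha\alpha_i}(Q_i(\cdot|X_{i-1})\|P_{X_i})\right)\right).
\end{equation}
Taking $\frac{1}{\alpha-1}\log$ of the product of these factors yields precisely the first displayed bound in the corollary statement, with the convention $X_0=c$ (a.s.) so that the $i=1$ term is unconditional.

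For the second part, I would pass to the limit $\alpha_i\to 1^+$ (equivalently, $\beta_i\to\infty$) for $i\geq 1$. The key analytic fact is the standard identity $\lim_{\beta\to\infty}\tfrac{1}{\beta}\log \mu\!\left(\exp(\beta g)\right) = \mathrm{ess\,sup}_\mu\, g$, applied to $g(x_{i-1})=\tfrac{\alpha\alpha_i-1}{\alpha_i}\,D_{\alpha\alpha_i}(Q_i(\cdot|x_{i-1})\|P_{X_i})$. As $\alpha_i\to 1^+$, the prefactor $\tfrac{\alpha\alpha_i-1}{\alpha_i}$ converges to $\alpha-1$, while $D_{\alpha\alpha_i}\to D_\alpha$ by continuity of R\'enyi divergences in the order parameter. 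Dividing by $(\alpha-1)$ then gives $\max_{x_{i-1}} D_\alpha(Q_i(\cdot|x_{i-1})\|P_i)$ for each $i\geq 2$. The $i=1$ term is treated separately: since $X_0=c$ is deterministic and $\beta_0=1$, the outer expectation is trivial, so the limit returns $D_\alpha(Q_1\|P_1)$, recovering \eqref{eq:tensorisationRenyi}.

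The main obstacle is a careful justification of the $\beta_i\to\infty$ limit, in particular interchanging the limit with the outer expectation and confirming that $\mathrm{ess\,sup}$ with respect to $\Pm_{X_{i-1}}$ can be replaced by $\max_{x_{i-1}}$ in the support of interest. If essential suprema are preferred over maxima (for continuous state spaces), the second display of the corollary should be interpreted accordingly; otherwise, under the mild assumption that the relevant support is discrete (as in the examples of \Cref{sec:examples}), the two coincide and the written form is exact. All remaining steps are algebraic manipulations, and no additional tool beyond \Cref{thm:tensorisationMarkov} and \eqref{eq:hellignerRenyi} is required.
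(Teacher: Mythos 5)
Your proposal is correct and follows exactly the route the paper intends: the corollary is obtained by substituting the identity $H_\gamma(\nu\|\mu)=\exp((\gamma-1)D_\gamma(\nu\|\mu))$ into~\Cref{eq:tensorisationGeneral}, taking logarithms, dividing by $(\alpha-1)$, and using $\lim_{\beta\to\infty}\frac1\beta\log\mu(\exp(\beta g))=\esssup_\mu g$ for the limiting case. Your caveat about replacing the essential supremum by a maximum on general state spaces is a fair (and correct) refinement of the written statement.
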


    This result allows us to control from above the Hellinger integral of two $n$-dimensional objects using the Hellinger integral of simpler $1$-dimensional objects.
    For instance, if the first measure represents the distribution induced by a time-homogeneous Markov chain, then all the kernels $Q_i$ coincide and the expression becomes even easier to compute, as one can see in~\Cref{sec:examples}.
    
    A similar approach can also be employed to provide a result in cases where $\Q$ is an arbitrary measure.
     \begin{theorem}
    Let $\Q$ and $\Pm$ be two probability measures on the space $\X^n$ such that $\Q\ll\Pm$, and assume that $\Pm$ is a product-measure (\textit{i.e.}, $\Pm=\bigotimes_{i=1}^n P_{i}$). Then, 
    \begin{equation}
        H_\alpha(\Q\|\Pm) \leq H_\alpha(Q_1\|P_1)\cdot\prod_{i=2}^n \max_{\textbf{x}_1^{i-1}=x_1\ldots x_{i-1}\in \X^{i-1}} H_\alpha(Q_i(\cdot|\textbf{x}_1^{i-1})\| P_i).
    \end{equation}
    Similarly, one has that 
     \begin{equation}
        D_\alpha(\Q\|\Pm) \leq D_\alpha(Q_1\|P_1) + \sum_{i=2}^n \max_{\textbf{x}_1^{i-1}=x_1\ldots x_{i-1}\in\mathcal{X}^{i-1} } D_\alpha(Q_i(\cdot|\textbf{x}_1^{i-1})\| P_i).
    \end{equation}
    \end{theorem}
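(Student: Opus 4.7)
The plan is to mirror the argument for~\Cref{thm:tensorisationMarkov} but applied to the full conditionals $Q_i(\cdot\mid \mathbf{x}_1^{i-1})$ rather than only the nearest-neighbour ones, and then to collapse all the auxiliary Hölder parameters by sending them to the trivial endpoint $\alpha_i\to 1^+$ at once. The key structural ingredient is that the reference $\Pm$ is a product, so we can peel off one coordinate at a time without having to interact with the law of the previous coordinates under $\Q$.

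First, I would write the Radon--Nikodym derivative via the chain rule $\frac{d\Q}{d\Pm}(\mathbf{x}_1^n)=\prod_{i=1}^{n}\frac{dQ_i(\cdot\mid\mathbf{x}_1^{i-1})}{dP_i}(x_i)$, which is well defined since $\Q\ll\Pm$ and $\Pm=\bigotimes_i P_i$. Substituting into $H_\alpha(\Q\|\Pm)=\int \bigl(d\Q/d\Pm\bigr)^\alpha d\Pm$ and invoking Fubini gives an iterated integral of the form
\begin{equation}
H_\alpha(\Q\|\Pm) = \int dP_1(x_1)\cdots\int dP_{n-1}(x_{n-1})\,\prod_{i=1}^{n-1}\!\left(\tfrac{dQ_i(\cdot\mid\mathbf{x}_1^{i-1})}{dP_i}\right)^{\!\alpha}\!\int\!\left(\tfrac{dQ_n(\cdot\mid\mathbf{x}_1^{n-1})}{dP_n}\right)^{\!\alpha}\!dP_n.
\end{equation}
The innermost integral is exactly $H_\alpha(Q_n(\cdot\mid\mathbf{x}_1^{n-1})\|P_n)$, which I would pull outside the outer integrals after replacing it by its essential supremum $\max_{\mathbf{x}_1^{n-1}}H_\alpha(Q_n(\cdot\mid\mathbf{x}_1^{n-1})\|P_n)$. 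Iterating this peeling procedure $n-1$ more times (at step $i$ the remaining innermost integral is $H_\alpha(Q_{i}(\cdot\mid\mathbf{x}_1^{i-1})\|P_i)$, bounded by its max over $\mathbf{x}_1^{i-1}$) yields the claimed product bound on $H_\alpha(\Q\|\Pm)$. Note that this is precisely what one obtains from the Hölder chain in~\Cref{thm:tensorisationMarkov} when all auxiliary parameters are pushed to the degenerate choice $\alpha_i\to 1^+$, $\beta_i\to\infty$; the Markovianity used there was only needed to rewrite the inner conditionals as one-step kernels, so dropping it simply replaces each $X_{i-1}$-only dependence by the full past $\mathbf{x}_1^{i-1}$.

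For the R\'enyi version, I would use the identity $H_\alpha=\exp((\alpha-1)D_\alpha)$ from~\Cref{eq:hellignerRenyi}. Taking logs of the Hellinger bound, dividing by $\alpha-1>0$, and observing that $\log\max_{\mathbf{x}_1^{i-1}}H_\alpha(Q_i(\cdot\mid\mathbf{x}_1^{i-1})\|P_i)=\max_{\mathbf{x}_1^{i-1}}\log H_\alpha(Q_i(\cdot\mid\mathbf{x}_1^{i-1})\|P_i)$ (monotonicity of $\log$) converts the product into the desired sum of maxima of $D_\alpha(Q_i(\cdot\mid\mathbf{x}_1^{i-1})\|P_i)$.

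I do not anticipate a serious obstacle: the non-Markov case is in fact easier than~\Cref{thm:tensorisationMarkov} because I never need the Hölder split that couples consecutive coordinates through $\Pm_{X_{i-1}}$-averaging. The only subtlety is measurability of $\mathbf{x}_1^{i-1}\mapsto H_\alpha(Q_i(\cdot\mid\mathbf{x}_1^{i-1})\|P_i)$, so that the essential supremum (rather than a genuine maximum) is what truly enters; if $\X$ is discrete as in the applications this is immediate, and in the general case one invokes a regular version of the conditional distributions, standard under the absolute continuity assumption $\Q\ll\Pm$.
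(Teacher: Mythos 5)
Your proposal is correct and follows essentially the same route as the paper: the paper proves this theorem by repeating the iterated peeling argument of~\Cref{thm:tensorisationMarkov} with the full past $\mathbf{x}_1^{i-1}$ in place of $x_{i-1}$ and with all auxiliary H\"older parameters sent to the degenerate endpoint $\beta_j\to\infty$, which is exactly your direct essential-supremum step. Your observation that Markovianity is not needed here (only the product structure of $\Pm$ and the chain-rule factorisation of $d\Q/d\Pm$) matches the paper's remark, and the passage to $D_\alpha$ via $H_\alpha=\exp((\alpha-1)D_\alpha)$ is the intended one.
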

    The proof follows from the same argument of~\Cref{thm:tensorisationMarkov}. The key difference is that, without making any additional assumptions on $\Q$, the entire ``past'' of the process needs to be considered. 
    This is why writing an expression similar to~\Cref{eq:tensorisationGeneral} without Markovianity would be rather cumbersome, and we restrict ourselves to the setting where the additional parameters are all considered to be such that  $\beta_{j}\to\infty$. 
    
    Finally, we show some lower bounds on Hellinger integrals and R\'enyi's divergences. 
    \begin{theorem}
        Let $\Q$ and $\Pm$ be two probability measures on the space $\X^n$ such that $\Q\ll\Pm$, and assume that $\Pm$ is a product-measure (\textit{i.e.}, $\Pm=\bigotimes_{i=1}^n P_{i}$). Assume also that $\Q$ is a Markov measure induced by $Q_1$ and the kernels $Q_i(\cdot|\cdot)$ with $1\leq i\leq n$, \textit{i.e.}, $\Q(x^n) = Q_1(x_1)\prod_{i=2}^n Q_i(x_i|x_{i-1})$. Moreover, given a constant $c$, let $X_0=c$ (almost surely) be an auxiliary random variable, then, 
    \begin{equation}
        H_\alpha(Q\|\Pm) \geq
        \prod_{i=1}^n\Pm_{X_{i-1}}^\frac{1}{\beta_{i-1}}\left(H_{\alpha\alpha_i}^\frac{\beta_{i-1}}{\alpha_i}(Q_i(\cdot|X_{i-1})\|P_{X_i})\right), 
    \end{equation}
    and
    \begin{equation}
        D_\alpha(\Q\|\Pm) \geq \frac{1}{\alpha-1}\sum_{i=1}^n \frac{1}{\beta_{i-1}} \log \Pm_{X_{i-1}}\left(\exp\left(\sign{(
        \alpha_i)}\frac{(\alpha\alpha_i-1)\beta_{i-1}}{\alpha_i}\left(D_{\alpha\alpha_i}(Q_{i}(\cdot|X_{i-1})\|P_{X_i}\right)\right)\right),
    \end{equation}
   where $\alpha_i\leq 1$ for $i\geq 1$, $\beta_0=1$, $\alpha_n=1$ and $\beta_i = \alpha_i/(\alpha_i-1)$. 
    Moreover, selecting $\alpha_1\to1^-$ which implies $\beta_{i}\to-\infty$ for every $i\geq 1$, one recovers 
    \begin{equation}
        H_\alpha(\Q\|\Pm) \geq H_\alpha(Q_1\|P_1)\cdot\prod_{i=2}^n \min_{x_{i-1}\in \X} H_\alpha(Q_i(\cdot|x_{i-1})\| P_i),\label{eq:tensorisationMarkovReverse}
    \end{equation}
    which, in the case of R\'enyi's divergences, specialises to
    \begin{equation}
        D_\alpha(\Q\|\Pm) \geq D_\alpha(Q_1\|P_1) + \sum_{i=2}^n \min_{x_{i-1}\in \mathcal{X}} D_\alpha(Q_i(\cdot|x_{i-1})\| P_i).
    \end{equation}
        \end{theorem}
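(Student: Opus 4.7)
The plan is to mirror the proof of the upper-bound tensorisation (\Cref{thm:tensorisationMarkov}) step by step, replacing each application of H\"older's inequality with its \emph{reverse} counterpart. The crucial input is that, for $0<p<1$ and conjugate exponent $q=p/(p-1)<0$ with $1/p+1/q=1$, and for strictly positive measurable $F,G$, one has
\[
\int F\,G\,d\mu \;\ge\; \left(\int F^p\,d\mu\right)^{1/p}\!\left(\int G^q\,d\mu\right)^{1/q}.
\]
Since the hypothesis imposes $\alpha_i\le 1$, the conjugate exponent $\beta_i=\alpha_i/(\alpha_i-1)$ is non-positive, which is exactly the regime where the reverse H\"older inequality applies; the boundary case $\alpha_i=1$ is recovered by a continuity argument.

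Concretely, starting from the same Markov disintegration as in~\eqref{eq:startTensor}--\eqref{eq:endTensor}, reverse H\"older applied to the inner $\Pm_{X_2}$-integration with exponents $(\alpha_2,\beta_2)$ produces a lower bound in which a factor of $H_{\alpha\alpha_2}^{1/\alpha_2}(Q_2(\cdot|X_1)\|P_{X_2})$ is separated from the smaller conditional Hellinger integral over $X_3^n$. Iterating this cascade $n-1$ times and invoking the Markov property of $\Q$ (so that each conditional kernel depends only on the previous state) yields the asserted product formula for $H_\alpha$. The R\'enyi counterpart is then immediate from the identity $H_\beta=\exp((\beta-1)D_\beta)$; the $\sign(\alpha_i)$ factor is needed because $(\alpha\alpha_i-1)$ may change sign depending on whether $\alpha\alpha_i$ is above or below $1$, which flips the direction of the inequality when one exponentiates a $D_{\alpha\alpha_i}$ back into an $H_{\alpha\alpha_i}$ inside a product that is itself being log-transformed.

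To obtain the simplified bounds~\eqref{eq:tensorisationMarkovReverse} and its R\'enyi analogue, I would then take $\alpha_1\to 1^-$, which forces $\beta_{i-1}\to -\infty$ for $i\ge 2$. Using the standard $L^p$-norm limit $\|g\|_{L^r(\mu)}\to\essinf g$ as $r\to -\infty$, the integrals $\Pm_{X_{i-1}}^{1/\beta_{i-1}}(\cdot)$ collapse to essential infima (equivalently, $\min_{x_{i-1}}$ in the discrete setting with everywhere-positive masses), so that the cascade telescopes into the clean product of one-step minima.

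The main obstacle I anticipate is the careful justification of reverse H\"older at each step: both factors must be strictly positive, and $\int G^{\beta_i}d\mu$ with negative exponent is infinite whenever $G$ vanishes on a set of positive $\Pm_{X_{i-1}}$-measure. Under the standing absolute-continuity hypothesis $\Q\ll\Pm$ this is controlled, but one must interpret the right-hand side using the convention $0^{-\infty}=+\infty$ or restrict the integration to the support of the relevant density; in either case the lower bound remains valid (possibly trivial). A secondary book-keeping point is the $\sign$ adjustment in the R\'enyi version: converting a product of $H_{\alpha\alpha_i}$'s into a sum of $D_{\alpha\alpha_i}$'s requires the arguments of the logarithms to remain positive across both the $\alpha\alpha_i<1$ and $\alpha\alpha_i>1$ regimes, which is precisely what the sign convention captures.
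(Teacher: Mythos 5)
Your proposal matches the paper's own proof, which is stated in one line: repeat the argument of the forward tensorisation theorem, replacing H\"older's inequality by reverse H\"older's inequality at each step of the cascade. Your additional care about strict positivity, the $\|g\|_{L^r(\mu)}\to\essinf g$ limit as $r\to-\infty$, and the sign bookkeeping in the R\'enyi version goes beyond what the paper records, but the route is the same.
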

    
    \begin{proof}
        The proof follows from similar arguments as  the proof of~\Cref{thm:tensorisationMarkov}. The sole difference is that, instead of using H\"older's inequality at each step, one uses reverse H\"older's inequality. 
    \end{proof}
\section{Explicit comparison for a binary kernel}\label{app:ex1}
The setting is the following: let $K$ be a time-homogeneous Markov chain  characterised by a doubly-stochastic transition matrix characterised by the vector of parameters $\bar{\lambda}= (\lambda_1,\ldots,\lambda_m)$ (\textit{i.e.}, the rows and columns of $K$ are permutations of $\bar{\lambda}$ with the constraints  $\sum_i K_{i,j} = \sum_j K_{i,j} = 1$). In this case, the Markov chain admits a stationary distribution $\pi$ which corresponds to the uniform distribution over the sample space. Hence, if one is considering an $m$-dimensional space, then $\pi(\{x\})=\frac1m$ for $x\in\{1,\ldots,m\}$.
    In this case, if $P_1 \sim \pi$, then $P_i \sim \pi$ for every $i\geq 1$. Moreover, one can prove that $K=K^\leftarrow$ and the bound of~\Cref{thm:main} reduces to \begin{equation}
        \Pm_{X^n}\left(\{| f-\Pm_{\bigotimes_n X_n}(f)|\geq t \right) \leq 2^\frac1\beta \exp\left(-\frac{1}{\beta}\left(2nt^2- (n-1)\log\left(m \left\lVert \bar{\lambda}\right\rVert_{\ell^\alpha}^\beta \right)\right)\right),\label{eq:probBoundGeneralChannel}
    \end{equation}
    Henceforth, we will consider $m=2$ for simplicity and, thus, $\left\lVert \bar{\lambda}\right\rVert_{\ell^\alpha}:=\left(\sum_{i=1}^m\lambda_i^\alpha\right)^\frac1\alpha$ can be expressed as $(\lambda^\alpha + (1-\lambda)^\alpha)^\frac1\alpha.$ Specialising~\Cref{eq:probBoundGeneralChannel} to this setting, one obtains the following result.
    
    \begin{corollary}\label{thm:contractingMarkov}
      Let $n>1$, and let $X_1,\ldots,X_n$ be a sequence of random variables such that $X_1\sim \pi$. For every $\alpha>1$ and every function $f$ such that~\Cref{eq:mcDiarmidAssumption} holds true, one has that  
      \begin{equation}      
          \mathbb{P}\left(\left\lvert f-\Pm_{\bigotimes_{i=1}^n X_i}(f)\right\rvert \geq t \right)\leq 2^\frac1\beta\exp\left(-\frac{2nt^2}{\beta} +\frac{n-1}{\beta}\ln\left(2((1-\lambda)^\alpha+\lambda^\alpha)^\frac{1}{\alpha-1}\right)\right) \label{eq:ourSimpleChannel}.
        \end{equation}
      \end{corollary}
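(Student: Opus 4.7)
The plan is to obtain the corollary as a direct specialization of equation~\eqref{eq:probBoundGeneralChannel} (equivalently, of equation~\eqref{eq:simplerTheorem} from~\Cref{thm:main}) to the two-state case $m=2$ with $\bar\lambda=(\lambda,1-\lambda)$. First I would verify that the marginals are all equal to $\pi$: since $K$ is doubly-stochastic with uniform stationary distribution $\pi$ and $X_1\sim\pi$, a trivial induction using $\pi K=\pi$ gives $\Pm_{X_i}=\pi$ for every $i\geq 1$. Consequently $\Pm_{X_i\mid X_{i-1}=x}=K(\cdot\mid x)$, which takes values in $\{\lambda,1-\lambda\}$, while $\Pm_{X_i}(y)=1/2$.

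Next I would compute the single-step Hellinger integral explicitly. From the definition,
\begin{equation}
H_\alpha\bigl(K(\cdot\mid x)\,\|\,\pi\bigr) \;=\; \sum_{y\in\{0,1\}} \frac{K(y\mid x)^\alpha}{\pi(y)^{\alpha-1}} \;=\; 2^{\alpha-1}\bigl(\lambda^\alpha+(1-\lambda)^\alpha\bigr),
\end{equation}
which is independent of $x$. Using $(\alpha-1)/\alpha=1/\beta$, the $\alpha$-th root is $H_\alpha^{1/\alpha}(K(\cdot\mid x)\,\|\,\pi)=2^{1/\beta}(\lambda^\alpha+(1-\lambda)^\alpha)^{1/\alpha}$, so the product over $i=2,\dots,n$ in~\eqref{eq:simplerTheorem} collapses to $2^{(n-1)/\beta}(\lambda^\alpha+(1-\lambda)^\alpha)^{(n-1)/\alpha}$.

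Finally I would plug this into~\eqref{eq:simplerTheorem} with $c_i=1/n$, which produces the exponential factor $\exp(-2nt^2/\beta)$, and fold the multiplicative bound into the exponent. Using the identity $\beta(\alpha-1)=\alpha$, so that $(n-1)/\alpha=(n-1)/(\beta(\alpha-1))$, the product term becomes
\begin{equation}
2^{(n-1)/\beta}(\lambda^\alpha+(1-\lambda)^\alpha)^{(n-1)/\alpha}\;=\;\exp\!\left(\frac{n-1}{\beta}\ln\!\Bigl(2\bigl((1-\lambda)^\alpha+\lambda^\alpha\bigr)^{1/(\alpha-1)}\Bigr)\right),
\end{equation}
which, combined with the leading $2^{1/\beta}$ and the $\exp(-2nt^2/\beta)$ factor, gives exactly~\eqref{eq:ourSimpleChannel}. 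There is no real obstacle here: the only care required is in tracking the exponents and in invoking the identity $\beta(\alpha-1)=\alpha$ to rewrite $(n-1)/\alpha$ in the desired form.
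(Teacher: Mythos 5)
Your proposal is correct and follows essentially the same route as the paper: the paper also obtains \Cref{eq:ourSimpleChannel} by computing the one-step Hellinger integral for the doubly-stochastic binary kernel with uniform marginals (equivalently, specialising \Cref{eq:probBoundGeneralChannel} to $m=2$ with $\left\lVert\bar\lambda\right\rVert_{\ell^\alpha}=(\lambda^\alpha+(1-\lambda)^\alpha)^{1/\alpha}$) and substituting into \Cref{eq:simplerTheorem}. Your exponent bookkeeping, including the identity $\beta(\alpha-1)=\alpha$, matches the paper's expression exactly.
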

\subsection{Comparison with~\cite[Theorem 1.2]{dependentViaMartingale}}\label{app:ex1Comp1}The bound obtained via the techniques of~\cite{dependentViaMartingale} is
      \begin{align}
          \mathbb{P}(\left\lvert f-\Pm_{X^n}(f)\right\rvert \geq t)&\leq 
          2\exp\left(-\frac{2\lambda^2 nt^2}{ \left((1-2\lambda)^n-1\right)^2}\right).\label{eq:kontorovichSimpleChannel}
      \end{align}
      Let us denote $\kappa_\alpha := ((1-\lambda)^\alpha+\lambda^\alpha))^\frac{1}{\alpha-1} <1$. Then, by direct comparison, it is possible to see that, whenever \begin{equation}t^2>\frac{((1-2\lambda)^n-1)^2(1-1/n)\ln(2\kappa_\alpha)}{2((1-2\lambda)^n-1)^2-\beta\lambda^2)}:=\bar{t}^2,\end{equation} 
    then the RHS of~\eqref{eq:ourSimpleChannel} decays faster than the RHS of~\eqref{eq:kontorovichSimpleChannel}. In particular, for a given $\lambda<\frac12$ and if  $\alpha>\frac43$, then
    \begin{equation}
        \bar{t}^2= (1+o_n(1))\frac{\ln(2\kappa_\alpha)}{2(1-\beta\lambda^2)} < (1+o_n(1))\frac{2\ln2}{4-\beta}. \label{eq:minimumEtaSimpleChannelBeta}
    \end{equation}


   Here, one can explicitly see the trade-off between the probability term and the Hellinger integral, described in~\Cref{rmk:tradeOff} and mediated by the choice of $\alpha$. Taking the limit $\alpha\to\infty$ in \Cref{eq:ourSimpleChannel} leads to the following upper bound
    \begin{equation}
    \mathbb{P}\left(\left\lvert f-\Pm_{\bigotimes_{i=1}^n X_i}(f)\right\rvert \geq t \right)\leq 2\exp\left(-2nt^2 + (n-1)\ln(2(1-\lambda))\right) \label{eq:ourSimpleChannel2}.
    \end{equation}
    In this setting, in order to improve over~\Cref{eq:kontorovichSimpleChannel}, one needs $t^2>\bar{t}^2$, with
    \begin{equation}
        \bar{t}^2 = (1+o_n(1))\frac{\ln(2(1-\lambda))}{2(1-\lambda^2)}< (1+o_n(1))\frac{2\ln2}{3}.\label{eq:minimumEtaSimpleChannelInfinity}
    \end{equation}
    Clearly, $1-\lambda>\kappa_\alpha$ for every $\alpha>1$. Hence, on the one hand, \Cref{eq:ourSimpleChannel2} introduces a worse multiplicative constant (a larger $\alpha$ implies a larger Hellinger integral, and we are considering the limit of $\alpha\to\infty$) and increases the minimum value of $t$ one can consider for a given $\lambda<1/2$ (\Cref{eq:minimumEtaSimpleChannelBeta} is increasing in $\alpha$). On the other hand, it provides a faster exponential decay with $n$. In fact, as $\beta\to 1$, the RHS of \eqref{eq:ourSimpleChannel2} scales as $\exp(-2nt^2(1+o_n(1)))$ for large enough $t$, which matches the behavior of~\Cref{eq:mcDiarmidStatement}. 
    \subsection{Comparison with~\cite{dependentViaStatAndChange}}\label{app:ex1Comp2}
    In the setting considered above, \cite[Theorem 1]{dependentViaStatAndChange} gives
    \begin{equation}
    \mathbb{P}\left(\left\lvert f-\pi^{\otimes n}(f)\right\rvert \geq t \right) \leq 2\exp\left(-\frac{2\lambda}{1-\lambda}nt^2\right). \label{eq:fanSimpleChannel}\end{equation}
    This means that, considering the decay provided by~\Cref{eq:ourSimpleChannel2} (which optimises the speed of decay for large enough $t$) whenever \begin{equation}
    t^2 > \frac{1-\lambda}{2-4\lambda}  \ln(2(1-\lambda))(1+o_n(1)),
    \label{eq:etaSimpleChannelFan2}
    \end{equation}
    then~\Cref{eq:ourSimpleChannel2} leads to a faster decay than~\Cref{eq:fanSimpleChannel}. 
    \subsection{Comparison with~\cite{Marton1996BoundingB}} \label{app:ex1Comp3}
    For the kernel considered in this appendix, \Cref{eq:convergenceMedianGeneral} holds with $C_n= (n-1)\log(2(1-\lambda))$. Furthermore, for every $i$ and $x, \hat{x}$, we have that
    $TV(\Pm_{X_i|X_{i-1}=x},\Pm_{X_i|X_{i-1}=\hat{x}})=(1-2\lambda).$ Consequently, assuming $t>1/(2\lambda)\sqrt{\ln(1/\Pm_{X^n}(E))/n}$, \cite[Proposition 1 \& Proposition 4]{Marton1996BoundingB} give
    \begin{align}            
    \Pm_{X^n}(E^c_t) &\leq \exp\left(-2n\left(t(2\lambda) -\sqrt{\frac{\ln(1/\Pm_{X^n}(E))}{2n}}\right)^2\right) \\
    & \leq \exp\left(-8nt^2\lambda^2+8t\lambda\sqrt{\frac{n\ln2}{2}}\right)\label{eq:martonGeometricSimpleChannel}.
    \end{align}
    Comparing~\Cref{eq:ourGeometricSimpleChannel} with $C_n=(n-1)\log(2(1-\lambda))$ with~\Cref{eq:martonGeometricSimpleChannel}, one can see that, whenever
     \begin{equation}
    t \geq \frac{\sqrt{2\ln(2(1-\lambda))}}{(1-4\lambda^2)}
\left(1+o_{n}(1)\right),
\end{equation}
then~\Cref{eq:ourGeometricSimpleChannel} improves over~\Cref{eq:martonGeometricSimpleChannel}. 

A similar comparison can be drawn with respect to the tools in~\cite{Marton1996} (in which the degree of dependence is measured differently), but it would lead to a worse bound than that expressed in~\Cref{eq:martonGeometricSimpleChannel}.
\subsection{MCMC}\label{app:MCMCBinary}
Considering the same setting detailed in the previous subsections, one can more explicitly characterise the parameters determining~\Cref{eq:thresholdEtaMCMC}. In particular, one has that $\pi = (1/2,1/2)$ and the spectral gap is equal to $1-2\lambda$. Consequently, if $n_0=0$ and one considers $\alpha\to\infty$,~\Cref{eq:MCMCFan} becomes:
\begin{equation}
    \mathbb{P}\left(\frac1n \sum_{i=1}^n f(X_{n_0+i})-\pi(f) > t \right) \leq 2\exp\left(-\frac{\lambda}{1-\lambda}\cdot \frac{2nt^2}{(b-a)^2} \right)\cdot \max\{ \nu(\{0\}),\nu(\{1\})\}, \label{eq:MCMCFan2Binary}
\end{equation}
while~\Cref{eq:MCMCOurn0} boils down to the following:
\begin{equation}
     \mathbb{P}\left(\frac1n \sum_{i=1}^n f(X_{n_0+i})-\pi(f) > t \right) \leq 2\exp\left(-\frac{2nt^2}{(b-a)^2}\right)(2-2\lambda)^{n-1}\cdot\max\{ \nu(\{0\}),\nu(\{1\})\}.\label{eq:MCMCOur2Binary}
\end{equation}
Making a direct comparison one can see that if
\begin{equation}
t^2 \geq \left(\left(1-\frac1n\right)\ln(2-2\lambda)\right)\frac{(b-a)^2}{2}\frac{1-\lambda}{1-2\lambda},
\end{equation}
then~\Cref{eq:MCMCOur2Binary} provides a faster decay than~\Cref{eq:MCMCFan2Binary}.
\subsection{Comparison between SDPI for $D_\alpha$
and hypercontractivity}\label{app:comparisonHyperContrDalpha}
If $\mu=(1/2,1/2)$ and $K=\text{BSC}(\lambda)$, then $\mu K= \mu$ and one has, following Wyner's notation~\cite[Eq. 1.17]{wynerCommon}, the so-called Doubly-Symmetric Binary Source ``DSBS$(\lambda)$''. In this setting, the hyper-contractivity constant is given by \cite{hypercontractivity,hypercontractivityBonamiBeckner} \begin{equation}
 \gamma^\star(\alpha)=(1-2\lambda)^2(\alpha-1)-1 \label{eq:hypercontrDSBS}.
\end{equation}Moreover, one can analytically see that, for every $\mu$,
\begin{equation} \frac{D_\alpha(K\| \mu K)}{D_\alpha(\delta_0\| \mu)} < \frac{\left(1-2\lambda\right)^{\left(1+\frac{1}{\alpha}\right)}}{\left(1-\lambda\right)^{\frac{\left(\alpha-1\right)}{\alpha}}}.\label{eq:SDPIDSBSRenyi}
\end{equation}
As $\alpha\to1^+$, the LHS of~\Cref{eq:SDPIDSBSRenyi} approaches a ratio between KL-divergences, while the RHS approaches $(1-2\lambda)^2$, which equals $\eta_{KL}(K)$~\cite{sdpiRaginsky}. Furthermore, taking the limit of $\alpha\to\infty$ (which optimises the exponential rate of decay), the expression in~\Cref{eq:SDPIDSBSRenyi} provides an improvement over simply using DPI, while~\Cref{eq:hypercontrDSBS} trivialises to $\gamma^\star(+\infty) = +\infty$. Hence, denoting with $E=\{|f-\Pm_{\bigotimes X_i}(f)|\geq t\}$, one has that, 
for every $\alpha>1$,
\begin{align}
    \Pm_{X^n}(E) \leq 2^\frac1\beta \exp\left(-\frac{2nt^2}{\beta}  \right)\cdot\begin{cases}
           \exp\left(\frac{(1-2\lambda)^2(\alpha-1)-2}{(1-2\lambda)^2(\alpha-1)-1}(n-1)\log2\right)& \text{via \Cref{eq:generalResultDiscreteHomogeneous2Hyper} \& \Cref{eq:hypercontrDSBS}}, \\ \exp\left(\frac{1}{\beta} \frac{\left(1-2\lambda\right)^{\left(1+\frac{1}{\alpha}\right)}}{\left(1-\lambda\right)^{\frac{\left(\alpha-1\right)}{\alpha}}} (n-1)\log2\right) & \text{via \Cref{eq:SDPIDalphaBound} \& \Cref{eq:SDPIDSBSRenyi}}.
    \end{cases} \label{eq:comparisonDPIHyper}
\end{align}
One can see that, for $\alpha$ large enough, \Cref{eq:SDPIDalphaBound} improves upon~\Cref{eq:generalResultDiscreteHomogeneous2Hyper} for every $\lambda$. In fact, taking $\alpha\to\infty$ gives
\begin{equation}
    \Pm_{X^n}(E) \leq 2 \exp\left(-2nt^2  \right)\cdot\begin{cases}
           \exp\left((n-1)\log2\right)& \text{via \Cref{eq:generalResultDiscreteHomogeneous2Hyper} \& \Cref{eq:hypercontrDSBS}}, \\ \exp\left(\frac{\left(1-2\lambda\right)}{\left(1-\lambda\right)} (n-1)\log2\right) & \text{via \Cref{eq:SDPIDalphaBound} \& \Cref{eq:SDPIDSBSRenyi}}.
    \end{cases} 
\end{equation}
\section{Proof of \Cref{lemma:hAlphaSSRW}}\label{app:proofLemmaSSRW}

\begin{proof}
    For every $n\geq 0$, we have $$\text{supp}(S_n)=\bigcup_{j=0}^n \{ n-2j\}.$$ Moreover, given $0\leq j \leq n$, 
$$\mathbb{P}(S_n= n-2j) = \binom{n}{n-j}2^{-n}.$$ Furthermore, given $x\in \text{supp}(S_{n-1})$ and $0\leq j\leq n$,
    $$\mathbb{P}_{S_n|S_{n-1}=x}(n-2j)= \frac12\mathbbm{1}_{\{n-2j-x=1\}} + \frac12 \mathbbm{1}_{\{n-2j-x=-1\}}.$$
Therefore, the following chain of inequalities holds:
        \begin{align}
        H_\alpha (P_{S_n|S_{n-1}=x}\|P_{S_n}) &= \sum_{j=0}^n \mathbb{P}^\alpha_{S_n|S_{n-1}=x}(n-2j) \cdot \mathbb{P}_{S_n}^{1-\alpha}(n-2j) \notag\\ 
        &= 2^{-\alpha} \left(\mathbb{P}_{S_n}^{1-\alpha}(x+1)+\mathbb{P}_{S_n}^{1-\alpha}(x-1)\right)\notag \\
        &= 2^{-\alpha}\left(\left(\binom{n}{\frac{n+x+1}{2}} 2^{-n}\right)^{1-\alpha}+\left(\binom{n}{\frac{n+x-1}{2}} 2^{-n}\right)^{1-\alpha}\right)\notag\\
        &= 2^{-\alpha+n(\alpha-1)}\left(\binom{n}{\frac{n+x+1}{2}}^{1-\alpha}+\binom{n}{\frac{n+x-1}{2}} ^{1-\alpha}\right)\notag\\
        &\leq 2^{-\alpha+n(\alpha-1)}\left(\left(\frac{n}{\frac{n+x+1}{2}} \right)^{\frac{n+x+1}{2}(1-\alpha)}+\left(\frac{n}{\frac{n+x-1}{2}} \right)^{\frac{n+x-1}{2}(1-\alpha)}\right) \label{eq:firstIneqBinomial}\\
        &= 2^{-\alpha+n(\alpha-1)}\left(\left(\frac{n+x+1}{2n} \right)^{\frac{n+x+1}{2}(\alpha-1)}+\left(\frac{n+x-1}{2n} \right)^{\frac{n+x-1}{2}(\alpha-1)}\right)\\
        &\leq 2^{-\alpha+n(\alpha-1)}\left(\left(\frac{n+x+1}{2n} \right)^{\frac{n+x+1}{2}(\alpha-1)}+\left(\frac{n+x+1}{2n} \right)^{\frac{n+x-1}{2}(\alpha-1)}\right) \label{eq:secondInequalityBinom} \\
        &= 2^{-\alpha+n(\alpha-1)} \left(\frac{n+x+1}{2n}\right)^{\frac{n+x-1}{2}(\alpha-1)} \cdot\left(1+\left(\frac{n+x+1}{2n}\right)^{(\alpha-1)}\right) \label{eq:firstEqualityBinom} \\
        &\leq 
        2^{-\alpha+n(\alpha-1)+1}.\notag
        \end{align}      
        Here, the inequality \eqref{eq:firstIneqBinomial} follows from the fact that $\binom{n}{k}\geq \left(\frac{n}{k}\right)^k$ along with $1-\alpha\leq 0$; the inequality \eqref{eq:secondInequalityBinom} follows from the fact that $\frac{n+x+1}{2n}>0$. 
        Moreover, it is easy to see that  $\frac{n+x+1}{2}= \frac{n+x-1}{2}+1$ and thus the factorisation in~\Cref{eq:firstEqualityBinom} follows. To conclude, it suffice to notice that  $\frac{n+x+1}{2n}\leq 1$.
        Denoting  with $\frac1\beta = \frac{\alpha-1}{\alpha}$, the computations just above imply that
    \begin{align*}
    H_\alpha^\frac1\alpha(\Pm_{S^n}\|\Pm_{\bigotimes_{j=1}^n S_j}) &\leq \prod_{i=2}^n \max_{x\in \text{supp}{S_{i-1}}} H_\alpha^\frac1\alpha(\Pm_{S_i|S_{i-1=x}}\|\Pm_{S_i}) \\
    &\leq \prod_{i=2}^n 2^{\frac1\alpha-1 + \frac{i}{\beta}} =  \prod_{i=2}^n 2^{-\frac1\beta + \frac{i}{\beta}} = 2^{\frac1\beta\sum_{i=2}^n (i-1)} = 2^{\frac1\beta\sum_{j=1}^{n-1} j} = 2^{\frac{n(n-1)}{2\beta}}.
    \end{align*}
    For the lower bound we need the following observations. Let us denote with
    \begin{equation}
         h(k)= \binom{n}{\frac{n+k+1}{2}}^{1-\alpha}+\binom{n}{\frac{n+k-1}{2}} ^{1-\alpha},
    \end{equation}
    one can easily prove that:
    \begin{itemize}
        \item $h(k)=h(-k)$ \textit{i.e.}, it is even;
        \item $h(k)$ is decreasing if $k \in \{-n+1, \ldots, 0\}$ and increasing if $k\in \{0,\ldots, n-1\}$.
    \end{itemize}
   Hence, $\argmin_{k \in \{-n+1,\ldots,n-1\}} h(k) = 0$ and $\argmax_{k \in \{-n+1,\ldots,n-1\}} h(k) = \{n-1,-n+1\}$.
   One thus has that:
   \begin{align}
        H_\alpha (P_{S_n|S_{n-1}=x}\|P_{S_n}) &= 2^{-\alpha+n(\alpha-1)}\left(\binom{n}{\frac{n+x+1}{2}}^{1-\alpha}+\binom{n}{\frac{n+x-1}{2}} ^{1-\alpha}\right) \\
        &\geq 2^{-\alpha+n(\alpha-1)} 2 \binom{n}{\frac{n+1}{2}}^{1-\alpha} \label{eq:minimiser} \\
        &= 2^{-\alpha+n(\alpha-1)+1} \left(\frac{n!}{\left(\frac{n+1}{2}\right)!\left(\frac{n-1}{2}\right)!}\right)^{1-\alpha}
        \\ &\geq 2^{-\alpha+n(\alpha-1)+1} \left(\sqrt{\frac{n}{2\pi \left(\frac{n+1}{2} \frac{n-1}{2}\right)}}2^{nh_2\left(\frac{n+1}{2n}\right)}\right)^{1-\alpha}\label{eq:upperboundBinomial}
        \\ &= 2^{-\alpha+n(\alpha-1)+1-(\alpha-1) n h_2(\frac{n+1}{2n})} \left(\sqrt{\frac{2n}{\pi \left(n^2-1\right)}}\right)^{1-\alpha}
        \\
        &= 2^{-\alpha+n(\alpha-1)+1-(\alpha-1) n h_2(\frac{n+1}{2n})+ \frac{(\alpha-1)}{2} \log\left(\frac{\pi}{2}\frac{n^2-1}{n}\right)},
   \end{align}
   where~\Cref{eq:minimiser} follows from selecting $x=0$,~\Cref{eq:upperboundBinomial} follows from the fact that $$\binom{n}{k}\leq \sqrt{\frac{n}{2\pi k(n-k)}}2^{n h_2(k/n)},$$ where $h_2(x)=-x\log_2(x)-(1-x)\log_2(1-x)$ denotes the binary entropy (see~\cite[Problem 5.8]{informationTheoryGallager}). Thus,
   \begin{align}
        \log_2 H_\alpha^\frac1\alpha(\Pm_{S^n}\|\Pm_{\bigotimes_{j=1}^n S_j}) &\geq \log_2 \prod_{i=2}^n \min_{x\in \text{supp}{S_{i-1}}} H_\alpha^\frac1\alpha(\Pm_{S_i|S_{i-1=x}}\|\Pm_{S_i}) \\
    &= \frac1\alpha \left(\sum_{i=2}^n (1-\alpha)+i(\alpha-1)-(\alpha-1) i h_2\left(\frac{i+1}{2i}\right)+ \frac{(\alpha-1)}{2} \log_2\left(\frac{\pi}{2}\left(\frac{i^2-1}{i}\right)\right)\right) \\
    &= \sum_{i=2}^n -\frac1\beta +\frac1\beta i \left(1-h_2\left(\frac{i+1}{2i}\right)\right)+\frac{1}{2\beta}\log_2\left(\frac{\pi}{2}\left(\frac{i^2-1}{i}\right)\right)\\
    &= \frac1\beta\left( \sum_{i=2}^n -1 + i\left(1-h_2\left(\frac{i+1}{2i}\right)\right)+\frac{1}{2}\log_2\left(\frac{\pi}{2}\left(\frac{i^2-1}{i}\right)\right)\right)
    \\&\geq \frac1\beta\left(\frac14+ \sum_{i=4}^n -1 +i\left(1-h_2\left(\frac{i+1}{2i}\right)\right)+\frac{1}{2}\log_2\left(\frac{\pi}{2}\left(\frac{i^2-1}{i}\right)\right)\right)\label{eq:firstTwoTerms}
    \\
    &\geq \frac1\beta\left(\frac14 + \sum_{i=4}^n -1 + \frac{1}{2}\log_2\left(\frac{\pi}{2}\left(\frac{i^2-1}{i}\right)\right)\right) \label{eq:lowerBoundEntropy}\\
    &\geq \frac1\beta\left(\frac14 - (n-3) + \frac{(n-3)}{2}\log_2\left(\frac{15\pi}{8}\right)\right) \label{eq:lowerBoundLog}\\
    &= \frac1\beta\left(\frac14 +(n-3)\left(-1+\frac{1}{2}\log_2\left(\frac{15\pi}{8}\right)\right)\right) \\
    &\geq \frac1\beta\left(\frac14 +\frac{(n-3)}{4}\right) = \frac{1}{4\beta}(n-2),
   \end{align}
   where~\Cref{eq:firstTwoTerms} follows as the first two terms in the summation add up to $\approx 0.27$,~\Cref{eq:lowerBoundEntropy}  follows from the fact that $0\leq h_2(k) \leq 1$ with equality if and only if $k=\frac12$ and thus $1-h_2((i+1)/2i) \geq 0$ (for $i$ large enough the difference approaches $0$). Moreover, since $g(x)=(x^2-1)/x$ is an increasing function and thus $g(x)\geq g(4)$ for $x\geq 4$, one also has that \Cref{eq:lowerBoundLog} holds. To conclude, one can see that $\frac12\log_2\left(\frac{15\pi}{8}\right)-1>\frac{1}{4}$. 
   
\end{proof}
    \section{Proof of the inequalities in \eqref{eq:det1} and \eqref{eq:det2}}
    \label{app:computationsNonMarkovianEx}
    
Given the setting, denoting with $y_n = \sum_{i=1}^{n-1}p_ix_i$, one has that
\begin{align*}
H_\alpha(\Pm_{X^n|X^{n-1}=x_1^{n-1}}\|(1/2,1/2)) &= \left(\frac12\right)^{1-\alpha}\left(\left(\frac12+\sum_{i=1}^{n-1}p_ix_i\right)^\alpha+\left(\frac12-\sum_{i=1}^{n-1}p_ix_i\right)^\alpha\right) \\
 &\leq \left(\frac12\right)^{1-\alpha}\left(\left(\frac12+\sum_{i=1}^{n-1}p_ix_i\right)^{\lfloor\alpha\rfloor}+\left(\frac12-\sum_{i=1}^{n-1}p_ix_i\right)^{\lfloor\alpha\rfloor}\right) \\
    &= \left(\frac12\right)^{1-\alpha}\Bigg(\sum_{k=0}^{\lfloor\alpha\rfloor} \binom{{\lfloor\alpha\rfloor}}{k} \left(\frac12\right)^{{\lfloor\alpha\rfloor}-k}y_n^k + \sum_{k=0}^{\lfloor\alpha\rfloor} \binom{{\lfloor\alpha\rfloor}}{k} \left(\frac12\right)^{{\lfloor\alpha\rfloor}-k}(-y_n)^k\Bigg) \\
    &= \left(\frac12\right)^{1-\alpha}\left(\sum_{k=0}^{\lfloor\alpha\rfloor} \binom{{\lfloor\alpha\rfloor}}{k} \left(\frac12\right)^{{\lfloor\alpha\rfloor}-k}\left(y_n^k +(-y_n)^k\right)\right)
\end{align*}
\begin{align*}
&= \left(\frac12\right)^{1-\alpha}\left(\sum_{j=0}^{\lfloor\alpha/2\rfloor} \binom{{\lfloor\alpha\rfloor}}{2j} \left(\frac12\right)^{{\lfloor\alpha\rfloor}-2j}2y_n^{2j}\right) \\
    &\leq 2\sum_{j=0}^{\lfloor\alpha/2\rfloor} \binom{{\lfloor\alpha\rfloor}}{2j} (2y_n)^{2j}.
\end{align*}
Moreover, setting $p_i = 2^{-i-1}$ gives 
\begin{align*}
\max_{x_1^{n-1}}H_\alpha(\Pm_{X^n|X^{n-1}=x_1^{n-1}}\|(1/2,1/2)) &\leq2\max_{x_1^{n-1}}\sum_{j=0}^{\lfloor\alpha/2\rfloor} \binom{{\lfloor\alpha\rfloor}}{2j} \left(2\sum_{i=1}^{n-1}p_ix_i\right)^{2j} \\
 &=2 \sum_{j=0}^{\lfloor\alpha/2\rfloor} \binom{{\lfloor\alpha\rfloor}}{2j} \left(\sum_{i=1}^{n-1}2^{-i}\right)^{2j} \\
 &=2\sum_{j=0}^{\lfloor\alpha/2\rfloor} \binom{{\lfloor\alpha\rfloor}}{2j} \left(1-2^{-n+1}\right)^{2j} \\
 &\leq 2\sum_{j=0}^{\lfloor\alpha/2\rfloor} \binom{{\lfloor\alpha\rfloor}}{2j} = 2^{{\lfloor\alpha\rfloor}} \leq 2^{\alpha}.
\end{align*}

	%
	
	
	

\end{document}